\documentclass[11pt]{article}
\usepackage{graphicx}
\usepackage{amsmath}
\usepackage{amssymb}
\usepackage{xspace}
\usepackage{ifthen}

\def\ls{\log^* n}

\def\sse{\subseteq}

\def\M{{\mathcal{M}^{(2)}}}
\def\MK{{\mathcal{M}^{(2,k)}}}

\def\F2{\mathcal{F}^{(2)}}
\def\C{\mathcal{C}}
\def\A{\mathcal{A}}
\def\dist{\mathcal{F}}
\def\ivvi{{\bar{\phi}_i}}

\def\algfull{{\sc Optimal Price Scaling (OPS)}\xspace}
\def\alg{{\sc OPS}\xspace}

\def\algoptfull{{\sc Black-Box Reduction (BBR)}\xspace}

\newcommand{\prob}[2][]{\text{\bf Pr}\ifthenelse{\not\equal{}{#1}}{_{#1}}{}\!\left[#2\right]}
\newcommand{\expect}[2][]{\text{\bf E}\ifthenelse{\not\equal{}{#1}}{_{#1}}{}\!\left[#2\right]}

\newcommand{\bid}{b}
\newcommand{\bids}{{\mathbf \bid}}
\newcommand{\bidsmi}{{\mathbf \bid}_{-i}}
\newcommand{\bidi}[1][i]{\bid_{#1}}

\newcommand{\val}{v}
\newcommand{\vals}{{\mathbf \val}}
\newcommand{\valsmi}{{\mathbf \val}_{-i}}
\newcommand{\vali}[1][i]{{\val_{#1}}}

\def\v2{\val^{(2)}}

\newcommand{\price}{p}
\newcommand{\prices}{p}
%{{\mathbf \price}}
\newcommand{\pricei}[1][i]{{\price_{#1}}}

\newtheorem{theorem}{Theorem}[section]
\newtheorem{lemma}		[theorem]	{Lemma}

\newtheorem{definition}	[theorem]	{Definition} 

\newenvironment{proof}{\noindent {\em {Proof:}}}{$\blacksquare$\vskip \belowdisplayskip}

\title{Near-Optimal Multi-Unit Auctions with Ordered Bidders}

\author{
 Elias Koutsoupias\thanks{Partially supported by the
    European Union Seventh Framework Programme FP7, grant 284731
    (UaESMC), and by the ESF-NSRF research program Thales (AGT).} \\
        University of Oxford and \\
	University of Athens\\
%	Wolfson Building, Parks Road\\
%	Oxford OX1 3QD\\
        \tt{elias@cs.ox.ac.uk}
\and Stefano Leonardi\thanks{This work was partially supported from EU ERC research grant PAAI (Practical Approximation Algorithms)}\\
	Sapienza University of Rome\\
%	Via Ariosto 25\\
%	00185, Rome, Italy\\
        \tt{leon@dis.uniroma1.it}
\and Tim Roughgarden\thanks{Supported in part by NSF grant
CCF-1016885, an ONR PECASE Award, and an AFOSR MURI grant.}\\
		    Stanford University\\
%		    353 Serra Mall\\
%		    Stanford, CA 94305\\
        \tt{tim@cs.stanford.edu}
}

\begin{document}

\maketitle

\begin{abstract}
We construct prior-free auctions with constant-factor approximation
guarantees with ordered bidders, in both unlimited
and limited supply settings.  We compare the expected revenue of our
auctions on a bid vector to the monotone price benchmark, the 
maximum revenue that can be
obtained from a bid vector using supply-respecting prices that are
nonincreasing in the bidder ordering and bounded above by the
second-highest bid.
As a consequence, our auctions are simultaneously near-optimal 
in a wide range of Bayesian multi-unit environments.
\end{abstract}

\section{Introduction}
\label{sec:intro}

The goal in prior-free auction design is to design auctions that have
robust, input-by-input performance guarantees.  Traditionally,
auctions are evaluated using average-case or Bayesian analysis,
and expected auction performance is optimized with respect to a prior
distribution over inputs (i.e., bid vectors).  The Bayesian versions of
the problems we consider are completely solved~\cite{M81}.  Worst-case
guarantees are desirable when, for example, good prior information is
expensive or impossible to acquire, and when a single auction is to be
re-used several times, in settings with different or not-yet-known
input distributions. 

Prior-free auctions were first studied by
Goldberg et al.~\cite{G+06,GHw99}.  They focused on symmetric settings,
where goods and bidders are identical, and sought auctions with
expected revenue close to the fixed-price benchmark~$\F2$,
defined as the maximum revenue that can be obtained from a given bid
vector by offering every bidder a common posted price (i.e.,
take-it-or-leave-it offer) that is at most
the second-highest bid.
Goldberg et al.~\cite{G+06} showed that no auction has expected
  revenue more than a $\approx .42$ fraction of $\F2$ for every bid
  vector, and constructed auctions with expected revenue
at least a constant fraction of this benchmark on every input.
See Hartline and Karlin~\cite{HK07} for a survey of further work in
this vein.

Hartline and Roughgarden~\cite{HR08} proposed a framework for defining
meaningful performance benchmarks much more generally --- when bidders
or feasibility constraints are asymmetric, and for objective functions
other than revenue.
The first step of this framework is a ``Bayesian thought experiment''
--- if bidders' valuations were drawn from a prior distribution in some
class, what would the optimal auction be?
The second step is to characterize the collection~$\C$ of all optimal
auctions that can arise, ranging over all permissible prior distributions.
Finally, given a bid vector~$\bids$, the performance benchmark is
defined as the maximum objective function value obtained by an auction
in~$\C$ on the input~$\bids$. 
This framework regenerates the $\F2$ benchmark (modulo the technically
necessary upper bound on prices) and has been used for several
other objective functions and asymmetric
environments~\cite{DH09,HR08,HR09,HY11,LR12}.
Every benchmark generated by this framework is
automatically well motivated in the following sense: if the
performance of an auction is within a constant factor of such a
benchmark for every input, 
then in particular it is simultaneously near-optimal in
every Bayesian environment with valuations drawn from one of the
permissible prior distributions.\footnote{This weaker goal of good
{\em prior-independent} auctions can also be studied in its own
right~\cite{D+11,DRY10,RTY12}.  See~\cite{ADMW13,CM11,LRS09} for other
interpolations between average-case and worst-case analysis of auctions.}

Leonardi and Roughgarden~\cite{LR12} studied the design and analysis
of prior-free digital goods (i.e., unlimited supply) auctions
with asymmetric bidders.  They pointed out that the framework
in~\cite{HR08} can be applied successfully to non-identical bidders
only if sufficient qualitative information about bidder asymmetry
is publicly known.  They proposed a model of {\em ordered bidders}.
Earlier bidders are in some sense expected to have higher valuations.
This information could be derived from, for example, zip codes,
eBay bidding histories, credit history, previous transactions
with the seller, and so on.  
Leonardi and Roughgarden~\cite{LR12} defined the 
{\em monotone price benchmark}~$\M(\bids)$ for every bid vector
$\bids$ as the maximum revenue obtainable via a monotone
price vector --- meaning prices are nonincreasing in the bidder
ordering --- in which every price is at most the second-highest
bid.\footnote{This benchmark was also considered earlier, with a
  different motivation and application, by Aggarwal and
  Hartline~\cite{AH06}.}
The value of this benchmark is always at least that of the fixed-price
benchmark~$\F2$, and can be a factor of $\Theta(\log n)$ larger,
where~$n$ is the number of bidders.
Essentially by construction, a digital goods auction that always has
revenue at least a constant fraction of~$\M$ is simultaneously
near-optimal in every Bayesian environment with ordered distributions
(where monopoly prices are nonincreasing in the bidder ordering), or
when the valuation distribution of each bidder stochastically dominates
that of the next one in the ordering (see~\cite{LR12} for details).
Examples include uniform distributions with intervals~$[0,h_i]$ and
nonincreasing $h_i$'s; exponential 
distributions with nondecreasing rates; Gaussian distributions with
nonincreasing means; and so on.  
The main result in~\cite{LR12} is a prior-free digital goods auction
with ordered bidders with expected revenue~$\Omega(\M(\bids)/\ls)$
for every input~$\bids$, where~$n$ is the number of
bidders and~$\ls$ denotes the number of times that the $\log_2$ operator can
be applied to~$n$ before the result drops below a fixed constant.\footnote{Aggarwal and Hartline~\cite{AH06} previously
  obtained an incomparable guarantee of
$\Omega(\M(\bids)) - O(h \log \log \log h)$, where 
$h$ is the ratio between the maximum and minimum bids.}

\subsection{Our Results} 

We give the first digital goods auction that is~$O(1)$-competitive
with the monotone price benchmark~$\M$.  Our auction is simple and
natural.  It follows the standard approach of randomly partitioning the
bidders into two groups, using one group of bidders to set prices for
the other.  We restrict prices to be (essentially) all powers of a certain
constant, but otherwise our prices are simply the optimal monotone
ones for the first bidder group.  Finally, to handle inputs where the
monotone price benchmark derives most of its revenue from a small
number of bidders, with constant probability we invoke an auction that
is~$O(1)$-competitive with the fixed-price benchmark~$\F2$.

We extend our results to multi-unit auctions, where the number of
items~$k$ can be less than the number of bidders.  We consider the
analog~$\MK$ of the monotone price benchmark, which maximizes only
over (monotone) price vectors that sell at most~$k$ units.  We prove that
every auction that is~$O(1)$-competitive with the benchmark~$\MK$
implies simultaneously near-optimal for a range of Bayesian
multi-unit
environments --- roughly, those in which the (ironed) virtual
valuation functions of the bidders form a pointwise total ordering.
We also give a general reduction, showing how to build a
limited-supply auction that is~$O(1)$-competitive w.r.t.~$\MK$ from an
unlimited-supply auction that is~$O(1)$-competitive w.r.t.~$\M$.

\section{Preliminaries}\label{sec:prelim}

In a {\em multi-unit auction}, there is one seller, $n$ bidders,
and~$k$ identical items.
Each bidder wants
only one good, and has a private --- i.e., unknown to the seller ---
{\em valuation} $\vali$.
We call the special case where~$k=n$ {\em unlimited supply} or {\em
  digital goods}.
We study direct-revelation auctions, in which the bidders
report bids~$\bids$ to the seller, and the seller then decides who
wins a good and at what price.\footnote{For the questions we ask, the 
``Revelation Principle'' (see, e.g., Nisan~\cite{N07}) ensures that
  there is no loss of generality by considering only direct-revelation
  auctions.}  For a fixed (randomized)
auction, we use $X_i(\bids)$ and $P_i(\bids)$ to denote the winning
probability and expected payment of bidder~$i$ when the bid
profile is~$\bids$.
As in previous works on prior-free auction design, we consider only
auctions that are individually rational --- meaning $P_i(\bids) \le
v_i \cdot X_i(\bids)$ for every~$i$ and~$\bids$ --- and truthful,
meaning that for each bidder~$i$ and fixed bids~$\bidsmi$ by the other bidders,
bidder~$i$ maximizes its quasi-linear utility $v_i \cdot
X_i(\bidi,\bidsmi) - P_i(\bidi,\bidsmi)$ by setting $\bidi = \vali$.
Since we consider only truthful auctions, from now on we use
bids~$\bids$ and valuations~$\vals$ interchangeably.

Truthful and individually rational digital goods auctions have a nice
canonical form: for every bidder~$i$ there is a
(possibly randomized) function~$t_i(\valsmi)$ that, given the
valuations~$\valsmi$ of the other bidders, gives bidder~$i$ a
``take-it-or-leave-it offer'' at the price~$t_i(\valsmi)$.  This means
that bidder~$i$ is given a good if and only if~$v_i \ge t_i(\valsmi)$,
in which case it is charged the price $t_i(\valsmi)$.  It is clear
that every choice~$(t_1,\ldots,t_n)$ of such functions defines a
truthful, individually rational digital goods auction; conversely,
every such auction is equivalent to a choice
of~$(t_1,\ldots,t_n)$~\cite{G+06}.  A special case of
such an auction is a {\em price vector~$\prices$}, in which each~$t_i$
is the constant function~$t_i(\valsmi) = \pricei$.  When the supply is
limited (i.e., there are $k<n$ copies of the good), truthful
auctions induce functions~$(t_1,\ldots,t_n)$ with the property
that, on every input, at most~$k$ bidders win.

The {\em revenue} of an auction on the valuation profile~$\vals$ is
the sum of the payments collected from the winners.  Let $\v2$ denote
the second-highest valuation of a profile~$\vals$.
The {\em fixed-price   benchmark}~$\F2$ is defined, for each valuation
profile~$\vals$, as the maximum revenue that can be obtained from a
constant price vector whose price is at most~$\v2$:
$$
\F2(\vals) = \max_{p \le \v2} \left( \sum_{i \,:\, v_i \ge p} p \right).
$$
Now suppose there is a known ordering on the bidders, say $1,2,\ldots,n$.
The {\em monotone-price   benchmark}~$\M$ is defined analogously
to~$\F2$, except that non-constant monotone price vectors are also
permitted:
\begin{equation}\label{eq:m2}
\M(\vals) = \max_{\v2 \ge p_1 \ge p_2 \ge \cdots \ge p_n} \left(
\sum_{i \,:\, v_i \ge p_i} p_i \right).
\end{equation}
Clearly, $\M(\vals) \ge \F2(\vals)$ for every input~$\vals$.

The monotonicity and upper-bound constraints are
  enforced only in the computation of the benchmark~$\M$.  Auctions,
  while obviously not privy to the private valuations, can
  employ whatever prices they see fit.  This is natural for prior-free
  auctions and   also necessary for non-trivial results~\cite{GH03}.

Finally, when we say that an auction is {\em $\alpha$-competitive} with or has
  {\em approximation factor~$\alpha$} for a benchmark, we mean that
  the auction's expected revenue is at least a $1/\alpha$ fraction of the
  benchmark for every input~$\vals$.

\newcommand{\primal}{{\mathcal I}}
\newcommand{\start}{\text{start}}
\newcommand{\e}{\text{end}}
\newcommand{\mass}{\mu}
\newcommand{\smashed}{\hat I}
\newcommand{\I}{{\mathcal I}}
\newcommand{\hv}{{\hat \vals}}
\newcommand{\hA}{{\hat A}}
\newcommand{\hB}{{\hat B}}
\newcommand{\halgo}{\hat{\text{ALG}}}
\newcommand{\algo}{\text{ALG}}
\newcommand{\pre}[1]{\sigma(#1)}
\newcommand{\pricesR}{p}

\section{Optimal price scaling}
\label{sec:prior-free-auction}

In this section, we propose and analyze a simple auction, which we call \algfull. The basic idea is very natural: partition the sequence of bidders into two random sets, find the optimal $\M$ price vector $\pricesR$ for one part $A$ (the \emph{training part}) and offer $\pricesR$ to the other part $B$ (the \emph{test part}). We add two twists to this standard algorithmic scheme. First, we restrict the prices to be on discrete levels, so that at every level the prices remain the same, while the price drops from level to level by a constant factor $w$, where $w$ is a parameter of the algorithm (a particular value that works is $w=25$)\footnote{From our analysis of this algorithm, it follows if we, instead of fixing the prices at discrete levels, offered to the trial part the optimal price of the training part reduced by a factor $w$, we will still get a truthful algorithm with constant approximation ratio.}. The levels of the prices start at the second maximum value ${\vals^A}^{(2)}$ of the training part $A$. The second twist is that we run the algorithm with the dropping prices only with some constant probability and, with the remaining probability, we run a standard digital goods auction with constant competitive ratio. The precise description of the auction is given in Figure~\ref{fig:alg}.

\begin{figure}[h]
\hrule\medskip
\textbf{Input}: A valuation profile $\vals$ for a totally ordered
set~$N = \{1,2,\ldots,n\}$ of bidders.
\begin{enumerate}

\item With probability $1/2$, run a digital goods auction on~$\vals$ that is
  $O(1)$-competitive with respect to the benchmark $\F2$. With the
  remaining probability, run the following steps.

\item Choose a subset $A \sse N$ uniformly at random, and
  partition~$N$ into the two sets~$A$ and~$B$. Let $\vals^A$ denote the valuation profile $\vals$ in which we set the values not in $A$ to $0$, that is, 
$$\vals^A_j=
\begin{cases}
\vals_j & \text{if $j\in A$} \\
0 & \text{otherwise}.
\end{cases}
$$
Define $\vals^B$ in a similar way. All three sequences $\vals$,
$\vals^A$, and $\vals^B$ have the same length.
\item Compute an optimal monotone $\M$ price vector~$\pricesR$
  for~$A$ with prices restricted to discrete values in  $\{{\vals^A}^{(2)} /w^k \,:\, k=0,\ldots\}$. 

\item Sell items to bidders in~$B$ only, applying prices~$\pricesR$ to $\vals^B$.

\end{enumerate}
\caption{\textsf{The auction \algfull.}\label{fig:alg}}
\medskip\hrule
\end{figure}
 
We next elaborate on the steps of the auction.
In the first step, we
run an arbitrary digital goods auction that is~$O(1)$-competitive with
respect to the 
fixed-price benchmark~$\F2$.  The best-known approximation factor
is~3.12~\cite{II10}; there are also very simple auctions with
approximation factors~4~\cite{G+06} and~4.68~\cite{AMS09}.  Intuitively, this
step is meant to extract good revenue from the set of bidders with
valuations almost as high as the second-highest valuation.

The second step of the algorithm randomly partitions the bidders into a
``training set''~$A$ and a ``test set''~$B$.  Almost all prior-free
auctions have this structure, with the bidders in the training set
setting prices for those in the test set.  We want to keep the three sequences
$\vals^A$, and $\vals^B$ aligned to simplify the pricing of
the next two steps. To do this, we keep all the elements of $\vals$
in $\vals^A$ and $\vals^B$, but in $\vals^A$ we lower the elements that are not in
$A$ to 0, and similarly, we lower the elements of sequence $\vals^B$ that
are not in $B$ to 0.

Also for simplicity, we sell
(in the fourth step) only to bidders in the test set~$B$.  An obvious
optimization is to sell simultaneously to bidders in~$A$, using the bids
of~$B$; this would improve the hidden constant in our approximation
guarantee by a factor of~2.

The optimal monotone price vector is the one that maximizes the $\M$ revenue obtained from the bidders in~$A$ when prices are scaled down to the next level $\frac{{\vals^A}^{(2)}}{w^k}$. 
The scaling down of the prices won't reduce the optimal revenue on the $A$ set of bidders for more than a factor of $w$. The final step applies the prices to bidders in the test set~$B$.

The \alg auction is truthful, as each bidder faces a take-it-or-leave-it offer at a price that is independent of its reported valuation. We also note that the \alg auction can be implemented in polynomial time, as $\pricesR$ can be computed efficiently using dynamic programming.

Our main result is a prior-free approximation guarantee for the \alg
auction.

\begin{theorem}\label{thm:main}
There is a constant $c_0 > 0$ such that, for every valuation profile
$\vals$, the expected revenue of the \alg auction is at least 
$c_0 \cdot \M(\vals)$.
%C: Give a value if possible
\end{theorem}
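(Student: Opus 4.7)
My plan is to exploit the two-step structure of \alg. Step~1 delivers expected revenue $\Omega(\F2(\vals))$ on its own, so it suffices to show that whenever $\M(\vals) \ge C \cdot \F2(\vals)$ for a sufficiently large constant $C$, Steps~2--4 extract expected revenue $\Omega(\M(\vals))$; in the complementary regime $\M(\vals) < C \cdot \F2(\vals)$, Step~1 alone already yields the conclusion. I therefore focus on the case $\M(\vals) \ge C \cdot \F2(\vals)$.

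I would first discretize: let $\pricesR^*$ be the optimal monotone grid-aligned price vector for $\vals$ (prices in $\{\vals^{(2)}/w^k\}$). Rounding any feasible monotone vector down to the next grid level loses at most a factor $w$, so $\pricesR^*$ earns revenue at least $\M(\vals)/w$. The critical structural observation is that at any grid price~$q$, at most $\F2(\vals)/q$ bidders have $v_i \ge q$, so the revenue $\pricesR^*$ extracts at any single level is bounded by $\F2(\vals)$. Summing over the $O(\log_w K) = O(1)$ ``heavy'' levels $q \ge \F2(\vals)/K$ contributes only $O(\F2(\vals))$; the remaining ``light'' revenue -- where every per-bidder contribution is at most $\F2(\vals)/K$ -- must therefore be $\Omega(\M(\vals))$ once $C$ is large enough. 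A Chernoff bound on the independent $A/B$ assignment, using the small per-bidder contribution, then shows that $\pricesR^*$ applied to $\vals^A$ still collects $\Omega(\M(\vals))$ with constant probability. Since $\pricesR_A$ computed in Step~3 is the grid-aligned monotone maximizer on $\vals^A$, it inherits this guarantee on $\vals^A$.

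The main obstacle is the \emph{generalization step}: arguing that $\pricesR_A$, though tuned to $A$, still extracts $\Omega(\M(\vals))$ when applied to $\vals^B$ in Step~4. A naive uniform-convergence bound over all monotone grid vectors gives only a $\sqrt{\log n}$-factor control on the $A/B$ discrepancy, which is too lossy; I expect the tight argument to exploit the block structure and monotonicity of the optimizer -- reducing the effective complexity to $O(1)$ per price level, rather than a full product over levels -- together with the large base $w = 25$, which ensures that distinct levels are geometrically separated and each level contributes a bounded fraction of the benchmark. Any remaining $O(\F2(\vals))$ slack between the revenues of $\pricesR_A$ on $\vals^A$ and on $\vals^B$ would then be absorbed into the constant factor from Step~1's $\F2$-competitive subroutine. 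This final reconciliation of training-set optimality with test-set performance is the delicate core of the proof; everything else is a fairly standard discretize/split/concentrate template.
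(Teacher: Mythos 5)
Your reduction to the regime $\M(\vals) \ge C\cdot\F2(\vals)$ is legitimate, and your first half is essentially sound: discretizing to the grid loses a factor $w$, the per-level revenue of the fixed optimal grid vector $p^*$ is at most $\F2(\vals)$, and a Hoeffding bound over the independent $A/B$ assignment shows that $p^*$ retains $\Omega(\M(\vals))$ revenue on $\vals^A$ with constant probability (modulo conditioning on the two highest bidders falling in $A$, so that the grid anchored at ${\vals^A}^{(2)}$ agrees with the one anchored at $\vals^{(2)}$; the paper pays the same factor $4$ for this). But the step you yourself flag as ``the delicate core''---that the price vector $p$ computed in Step~3, which is chosen \emph{after} seeing the partition and is therefore correlated with it, still earns $\Omega(\M(\vals))$ on $\vals^B$---is precisely the main technical content of the theorem, and your proposal offers no argument for it, only the expectation that ``block structure and monotonicity'' reduce the effective complexity to $O(1)$ per level. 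That hope does not materialize straightforwardly: the level boundaries of $p$ are placed adversarially with respect to the partition (they are biased toward $A$), a level may contain very few bids above its price, the family of candidate level intervals is far too rich for a union bound because low-mass intervals fail to be balanced with constant probability, and the $A$-versus-$B$ discrepancy of an adaptively chosen monotone vector is not $O(\F2(\vals))$ in any evident sense, so it cannot simply be absorbed by Step~1.

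The paper closes exactly this gap with dedicated machinery. It builds a partition-independent family of ``primal intervals'' whose masses grow geometrically by factors $(1+\epsilon)$, so that a Chernoff-plus-union bound shows all of them are simultaneously balanced between $A$ and $B$ with probability at least $31/32$ (Lemma~\ref{lemma:balance}); it then runs a matching process that associates each pricing level $J_t$ of the computed vector with a primal interval at the same level, and a charging (Phase) argument that uses the optimality of $p$ for $\vals^A$ together with the geometric decay $w$ of the price levels to charge the revenue of levels containing no balanced primal interval to nearby matched nice levels, yielding $R(p,\vals^A)\le 3c\cdot R(p,\vals^B)$ (Lemma~\ref{lemma:AvsB}); finally, a padding argument, in which the $\F2$-competitive subroutine of Step~1 absorbs an additive $O(s_0'\cdot\vals^{(2)})$ term, removes the bootstrapping assumption that the first level contains a primal interval. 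None of these ingredients, nor substitutes for them, appear in your proposal, so the proof is incomplete at precisely the point where the theorem is hard; the surrounding discretize/split/concentrate scaffolding you give is compatible with the paper's proof but does not by itself establish the result.
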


We outline the main ideas of the proof of the theorem here and we present the details below. Let us first make a useful assumption
\begin{definition}[First running assumption]
In the analysis of the algorithm, we will assume that the two bidders of $\vals$ with the highest valuation fall into $A$. Therefore $$\vals^{(2)}={\vals^A}^{(2)}.$$ 

This assumption holds with probability $1/4$, which will essentially increase the competitive ratio of the analysis below by a factor of 4. We remove this assumption, when we put all the pieces of the analysis together.
\end{definition}

Let us now fix some notation: 
\begin{definition} 
Define the $k$-th price level $p_{(k)}$ to be
$$p_{(k)}=\vals^{(2)}/w^k.$$ 
Let also $J_k$ denote the interval in which the prices of the auction are at the $k$-th level:
$$J_k=\{j\,:\, p_j=p_{(k)}\}.$$
\end{definition}
The intervals $J_0,\ldots, J_m$ are defined in the third step of the algorithm from $\vals^A$. If there are many optimal solutions in the third step, we fix one and we use it in the final step of the algorithm.

The intuition for the algorithm is that the  optimal revenue for the whole sequence $\vals$ is up to a constant factor equal to the expected optimal revenue from the subsequence $\vals^A$. We then need only to compare the optimal expected revenues from the subsequences $\vals^A$ and $\vals^B$. If there were many values in a pricing level $J_k$, then the random partition is expected to split almost evenly the high-valued bids between the training set $A$ and the test set $B$ and this will allow us to relate their revenue. One problem with this approach is that some levels may have few items; another more subtle problem is that even when a level has many items, we cannot easily argue that the two parts have almost the same number of high values, because there is a bias towards $A$ by the way the levels were created. 

To resolve both these issues and in order to compare the revenue of
$A$ and $B$, we use as an intermediary a set of some fixed intervals,
defined with respect to the set $\vals$  of all values (in contrast to the way
that the levels are created by the pricing of $A$). The set of fixed
intervals, which we will call \emph{primal intervals}, consists of one
collection of intervals for each pricing level. We want them to
have two crucial properties: first, \emph{every primal interval} to
have almost the same fraction (and in particular a fraction in
$[1/3,2/3]$)  in $A$ and $B$ of the bids higher than the pricing level; and second, that the revenue of $A$ is captured up to a constant factor by a set of primal intervals. 

\subsection{Primal intervals}
\label{sec:intervals}

We now define the set of primal intervals. They will be defined in terms of $\vals$, the whole sequence of values. We will need some definitions first:

\begin{definition} \ %
\begin{itemize}
\item Define $N_{\ell}$ to be the set of values greater or equal to the $\ell$-th price level.
\begin{equation} \label{eq:defN}
N_{\ell}= \{j\,:\, \vals_j\geq p_{(\ell)}\}
\end{equation}
\item Define $\mass_{\ell}(I)$, the mass of an interval $I$ at price level $\ell$, to be the number of its values that are in $N_{\ell}$ (at price level $\ell$ or higher).
$$\mass_{\ell}(I)=|I \cap N_{\ell}|.$$
\end{itemize}
\end{definition}

For each price level $\ell$, we create a set of intervals $\I_{\ell}$ which we will call \emph{primal intervals at price level $\ell$}. When we refer to the mass of a primal interval $I\in\primal_{\ell}$, we will mean $\mass_{\ell}(I)$, its mass at its price level. We will denote by $\start(I)$ and $\e(I)$ the beginning and end of an interval.

The primal intervals are defined with respect to 3 parameters: the parameter $w$ which defines the price levels in the algorithm, a parameter $\epsilon\in (0,1)$, which will be fixed later (a value of $\epsilon=1/25$ will work for the proof) and a parameter $s_0$; notice that the parameters $\epsilon$ and $s_0$ are used only in the analysis and they are not part of the algorithm.
 
We begin the construction with a fixed interval $I_o\in \I_0$ which starts at the first value and has $\mass_0(I_o)=s_0$, and we define the other intervals recursively. If $I\in \I_{\ell}$ is an interval that has already been defined, then we add the following 5 intervals every one of which has mass $(1+\epsilon)\cdot\mass_{\ell}(I)$; if there is not enough mass to create some of these 5 intervals, we still create them with the maximum possible mass to facilitate the recursive construction, we call them incomplete, and we throw them away at the end of the construction.
\begin{enumerate}
\item An interval at level $\I_{\ell}$ which starts at $\start(I)$.
\item An interval at level $\I_{\ell}$ which starts at $\e(I)$.
\item An interval at level $\I_{\ell+1}$ which starts at $\start(I)$.
\item An interval at level $\I_{\ell+1}$ which starts at $\e(I)$.
\item An interval at level $\I_{\ell}$ which \emph{ends} at $\e(I)$.
\end{enumerate}
\begin{figure}
\centering
\includegraphics{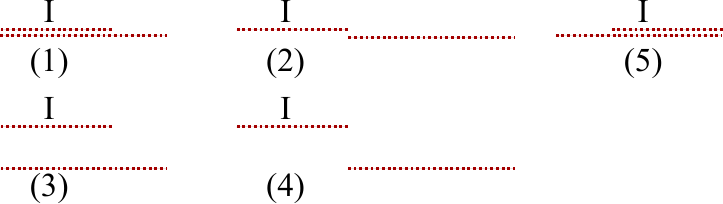}
\caption{The 5 intervals created from interval $I$. Rules 1, 2 and 5
  create intervals at the same level with $I$, while Rules 3 and 4 
  create intervals at the next level. The drawing indicates the mass of the intervals---equal to $(1+\epsilon)$ the mass of $I$---not their actual length.}
\label{fig:primalintervals}
\end{figure}
Figure~\ref{fig:primalintervals} illustrates the construction. The reason for keeping around incomplete intervals is to allow an interval at level $\ell$ to be able to create intervals at all higher levels. Specifically, we want an interval $I\in \I_{\ell}$ to create every possible interval at level $\ell+r$, $r\geq 1$, that start at the end (or beginning) of $I$ and have mass $(1+\epsilon)^r$ times the mass of $I$. For instance, even if the interval at level $\ell+r$ is incomplete, we keep it in the recursive construction to produce the potentially complete intervals at level $\ell+r+1$ by using Rules 3 and 4. We extend the idea of keeping around incomplete intervals to the original interval $I_o$: if there are not enough elements with high mass, we define $I_o$ to include the whole sequence, we call it incomplete, we use it to create recursively the other intervals, and we throw it away at the end. The mass of an incomplete interval in the construction is assumed to be the originally intended mass. For example, suppose that a complete interval $I\in \I_{\ell}$ creates an incomplete interval $I'$ which in turn creates a complete interval $I''$; the mass of $I''$ is $(1+\epsilon)^2$ times the mass of $I$, irrespectively of the actual mass of $I'$.

We now show a crucial property of the primal intervals: with probability almost 1, every primal interval contains almost equal number of elements in $A$ and $B$. More precisely, let us call a primal interval $I\in \primal_{\ell}$ \emph{balanced}, if at least a third of the elements in $I\cap N_{\ell}$ belong to $A$ and at least another third belongs to $B$. For every $\epsilon$, by selecting the parameter $s_0$ appropriately, we can make the probability of having some unbalanced primal interval arbitrarily small. Before we state and prove a useful fact:

\begin{lemma}
If $d$ is a positive constant such that $d^{\epsilon\cdot s_0}\leq 1/10$, then 
$$\sum_{\ell} \sum_{I\in\I_{\ell}} d^{\mass_{\ell}(I)} \leq 2\cdot d^{s_0}.$$
\end{lemma}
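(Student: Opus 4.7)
The plan is to organize the primal intervals not by price level~$\ell$ but by recursion depth. Assign depth $0$ to $I_o$, and depth $r+1$ to any interval created from a depth-$r$ interval by one of the five recursive rules. By a straightforward induction on $r$, every primal interval at depth $r$ (complete or incomplete) has \emph{intended} mass exactly $(1+\epsilon)^r s_0$, and the total number of primal intervals created at depth $r$ is at most $5^r$. For a complete interval $I$ at depth $r$, the actual mass $\mass_\ell(I)$ coincides with the intended mass $(1+\epsilon)^r s_0$; incomplete intervals are discarded at the end and do not appear in any $\I_\ell$.

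The double sum in the lemma ranges over all complete primal intervals, regardless of their price level, and each such interval sits at a unique depth, so I get the pointwise bound
$$\sum_\ell \sum_{I \in \I_\ell} d^{\mass_\ell(I)} \;\leq\; \sum_{r=0}^{\infty} 5^r \cdot d^{(1+\epsilon)^r s_0}.$$
At this point the price-level structure has disappeared entirely, and only the geometric growth of mass and population remains to be exploited.

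Next I show the right-hand side is at most $2 d^{s_0}$. The hypothesis $d^{\epsilon s_0} \leq 1/10$ forces $d < 1$, so $d^x$ is decreasing in $x$. Writing $a_r = 5^r d^{(1+\epsilon)^r s_0}$ and using that $(1+\epsilon)^r - (1+\epsilon)^{r-1} = \epsilon (1+\epsilon)^{r-1} \geq \epsilon$ for every $r \geq 1$, I get
$$\frac{a_r}{a_{r-1}} \;=\; 5 \cdot d^{\epsilon (1+\epsilon)^{r-1} s_0} \;\leq\; 5 \cdot d^{\epsilon s_0} \;\leq\; \tfrac{1}{2}.$$
Iterating this bound gives $a_r \leq 2^{-r} a_0 = 2^{-r} d^{s_0}$, and summing a geometric series yields $\sum_{r \geq 0} a_r \leq 2 d^{s_0}$, as required.

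The one place where care is needed is the mass accounting: it would be unsafe to use the intended mass $(1+\epsilon)^r s_0$ as the exponent of $d$ if actual masses could be strictly smaller, because $d < 1$ would then make $d^{\text{actual}} > d^{\text{intended}}$. The construction avoids this trap by definition: ``complete'' means precisely that the actual mass equals the intended mass, and incomplete intervals contribute nothing to the sum. Apart from this bookkeeping point, the proof is a direct ratio-test argument once the intervals are regrouped by depth.
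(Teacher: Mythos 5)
Your proof is correct and is essentially the paper's argument in unrolled form: grouping the primal intervals by recursion depth and summing the series $\sum_{r\ge 0} 5^r d^{(1+\epsilon)^r s_0}$ with ratio $5\, d^{\epsilon s_0}\le 1/2$ is exactly what the paper's backwards induction on the recursion $f(s)\le d^s+5f((1+\epsilon)s)$ accomplishes, using the hypothesis $d^{\epsilon s_0}\le 1/10$ in the same way. Your bookkeeping point that only complete intervals (actual mass equal to intended mass) appear in the sum matches the paper's convention of discarding incomplete intervals, so there is no gap.
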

\begin{proof}
Suppose that the initial interval in the construction has mass $s\geq s_0$ and define
$$f(s)=
\sum_{\ell} \sum_{I\in\I_{\ell}} d^{\mass_{\ell}(I)},$$ to be the maximum
value attained from the intervals produced from $s$ by any sequence of values.  Essentially, we want to
bound $f(s_0)$. By the recursive construction
of the intervals, we get that 
$$f(s)\leq c^s+5f((1+\epsilon)\cdot s).$$
The lemma now follows from showing $f(s)\leq 2 d^s$ by backwards induction. For $s>n$, this is vacuously true since there are no intervals. Suppose then that $f((1+\epsilon) s)\leq 2d^{(1+\epsilon)s}$, and we bound 
\begin{align*}
f(s) &\leq d^s+5f((1+\epsilon)s) \leq d^s+10d^{(1+\epsilon)s} \\
&= d^s+10 d^s d^{\epsilon\cdot s} \leq d^s+10 d^s d^{\epsilon s_0} \leq d^s +10 \frac{1}{10} d^s=2d^s.  
\end{align*} 
\end{proof}

\begin{lemma}
\label{lemma:balance} 
There is a large enough constant $s_0$, such that for every $\vals$ and $\epsilon$, with probability at least $31/32$ every primal interval is balanced.
\end{lemma}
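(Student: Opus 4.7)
The plan is to combine a Chernoff bound on each individual primal interval with a union bound, and then use the previous lemma to control the resulting sum. Since the partition $A,B$ is uniformly random, each element of any interval $I\in\primal_\ell$ lands in $A$ independently with probability $1/2$. Thus, for an interval $I$ whose mass $m = \mu_\ell(I)$ consists of $m$ independent coin flips, the event that fewer than $m/3$ elements of $I\cap N_\ell$ lie in $A$ (or symmetrically in $B$) is a large-deviation event, and a standard Chernoff bound yields
\[
\Pr[I \text{ is unbalanced}] \;\le\; 2\,e^{-\gamma m} \;\le\; d^{\mu_\ell(I)}
\]
for some absolute constants $\gamma>0$ and $d\in(0,1)$ that depend neither on $\ell$ nor on $I$.

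Next, I would take a union bound over all (complete) primal intervals:
\[
\Pr[\text{some primal interval is unbalanced}] \;\le\; \sum_{\ell}\sum_{I\in\primal_\ell} d^{\mu_\ell(I)}.
\]
Now I would pick $s_0$ large enough that $d^{\epsilon s_0}\le 1/10$, which is the hypothesis needed to invoke the previous lemma. That lemma then bounds the double sum above by $2d^{s_0}$. Enlarging $s_0$ further so that $2d^{s_0}\le 1/32$ yields the stated probability bound of $31/32$. Both conditions on $s_0$ are easily satisfied simultaneously by choosing $s_0$ sufficiently large (depending only on $\epsilon$ and the constant $d$ from the Chernoff bound, both of which are absolute constants).

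One small point to handle carefully is the interplay with incomplete intervals. The lemma quantifies only over primal intervals that survive the construction (complete ones), so those are the only intervals that must be balanced. For a complete interval the actual mass equals the ``intended'' mass $(1+\epsilon)^r s_0$ used in the recursive construction, and in particular is at least $s_0$, so the Chernoff estimate above is applied with the same mass value as the summation bound of the previous lemma. Incomplete intervals, while retained transiently in the recursion to seed complete intervals at higher levels, are discarded at the end and impose no balance requirement, so they do not appear in the union bound. This alignment of ``actual'' and ``intended'' mass on the intervals we care about is the only mildly subtle point; once it is observed, the argument is essentially a Chernoff-plus-union-bound calculation driven by the previous lemma.
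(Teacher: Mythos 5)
Your proposal is correct and follows essentially the same route as the paper: a Chernoff bound giving failure probability $d^{\mu_\ell(I)}$ per primal interval, a union bound, the preceding lemma to bound the resulting double sum by $2d^{s_0}$, and a choice of $s_0$ making this at most $1/32$. The clarification about incomplete intervals being discarded (and hence exempt from the balance requirement) is consistent with the paper's construction and adds no new content.
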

\begin{proof}
The probability that a primal interval $I$ is \emph{not} balanced is
at most $d^{\mass_{\ell}(I)}$ for some constant $d$ (Chernoff bound). Therefore, the
probability that some interval is not balanced is at most
$$\sum_{\ell} \sum_{I\in\I_{\ell}} d^{\mass_{\ell}(I)}\leq 2 d^{s_0},$$
which drops exponentially in $s_0$. By selecting $s_0$ large enough, we can make this probability as small as desired.  
\end{proof}

\subsection{Matching intervals}
\label{sec:matching-intervals}

To gain some intuition of the use of the primal intervals, let us
assume that we could associate with every interval $J_k$ a primal
interval $I_k\in\primal_k$ which is also a subset of $J_k$. Let us
further assume that among the possible $I_k$'s, we always select the
interval with maximum length. There are two facts that allow us to
relate the revenues from $B$ and from $A$: the first is that by the
construction of the primal intervals, the length of $I_k$ contains at
least $1/(1+2\epsilon)$ of $J_k$; the second is that the interval $I_k$ is
balanced. The latter shows that the revenue from $B$ in $I_k$ is a
constant fraction of the revenue from $N$ in $I_k$, which in turn is a
constant fraction from the revenue from $A$ in $J_k$.  The problem is
that there may be intervals $J_k$ that contain no primal interval in
$\primal_k$. To deal with this problem, we  use a charging argument which shows that the revenue lost from such intervals is not significant. 

To turn this idea into a concrete proof, we give a process that attempts to match every interval $J_k$ to some primal interval in $\primal_k$; if the process succeeds, we will call $J_k$ \emph{matched} and we will denote its matching primal interval by $I_k$; if the process does not succeed for some interval $J_k$ we will call $J_k$ \emph{unmatched}. We first match every interval $J_k$ that contains some primal interval at the same price level.
\begin{definition}
An interval $J_k$ is called nice, if there is an primal interval in $I\in\primal_k$ such that $\start(J_k)\leq \start(I) \leq \e(I) \leq \e(J_k)$. We match $J_k$ with the maximum such interval $I$ and we call it $I_k$.
\end{definition}
We process the remaining intervals from left to right. To bootstrap
the process, we make the assumption that the first interval $J_0$ is nice;  we will revisit and remove this assumption at the end of the proof.

\begin{definition}[Second running assumption] \label{def:assumption2}
Interval $J_0$ is nice, i.e., it contains a primal interval of $\primal_0$.
\end{definition}

Starting with each nice interval, we process its following intervals from left to right. We attempt to find a matching interval by creating a sequence of candidate intervals. The definition of the sequence of candidate matching intervals will be also useful later in the presentation of the main argument. Figure~\ref{fig:matching} illustrates the process which is defined in detail below: 

\begin{definition}[Matching]
For every interval $J_t$, we associate a non-empty sequence of \emph{candidate matching intervals}, all of which are at price level $t$. If $J_t$ is a nice interval, the sequence contains only the matching interval $I_t$. For every other interval $J_t$, the sequence of candidate matching intervals is created as follows:

The first candidate matching interval will be called $I'_t$ plays also a central role in the proof and it is based on the candidate matching intervals of the previous interval $J_{t-1}$. In particular, if the previous interval is matched, then $I'_t$ is the interval produced from $I_{t-1}$ by Rule 4. Otherwise, if the previous interval is unmatched, $I'_t$ is the interval produced from $I'_{t-1}$ by Rule 3.

If $\e(I'_t)>\e(J_t)$, the sequence of candidate matching intervals contains only $I'_t$ and we leave $J_t$ unmatched. 

Otherwise, the sequence of candidate matching intervals is produced from this first interval $I'_t$ by repeated applications of Rule 1, as long as the produced intervals do not extend beyond $J_t$ (i.e., they end at or before $\e(J_t)$). 

We match $J_t$ with the maximum interval in the candidate matching sequence, which we name $I_t$. 
\end{definition}

\begin{figure}
\centering
\includegraphics{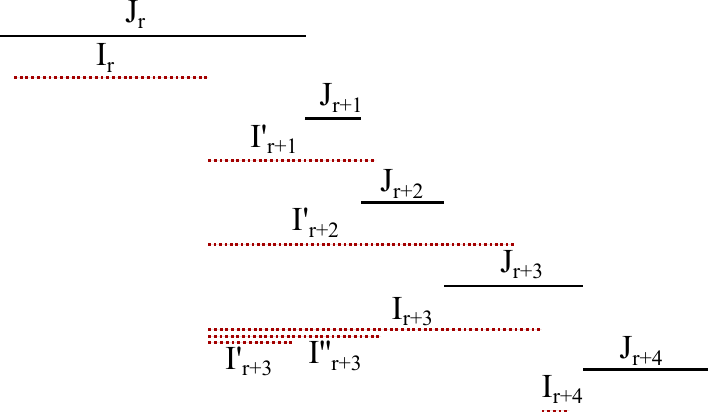}
\caption{An example of the matching process. Interval $J_r$ is nice. Interval $I'_{r+1}$
  is created from $I_r$ by Rule 4. It extends beyond $J_{r+1}$, so we leave $J_{r+1}$ unmatched. Interval $I'_{r+2}$ is created
  from $I'_{r+1}$ by Rule 3. Similarly, we leave $J_{r+2}$ unmatched. Interval $I'_{r+3}$ is created from
  $I'_{r+2}$ by Rule 3; it produces the sequence of candidate intervals $I'_{r+3},I''_{r+3},\cdots,I_{r+3}$ by repeated applications of
  Rule 1. Interval $J_{r+3}$ is matched to $I_{r+3}$. Intervals $J_{r+4}$ and $I_{r+4}$ exhibit a case in which the matched intervals do not 
  have to intersect. }
\label{fig:matching}
\end{figure}

\subsection{Main argument}
\label{sec:main-argument}

The set of nice intervals play central role in relating the revenue of $A$ and $B$. Before we proceed with the central argument, we give some essential definitions and prove some useful facts about the matched intervals.

\begin{definition}
Let $R(p,v)$ denote the total revenue extracted from the sequence of
values $v$ with prices $p$. Let $R_k(J,v)$ denote the
revenue that can be extracted from an interval $J$ of $v$ to which we
offer the $k$-level price $p_{(k)}$: $R_k(J,v)=p_{(k)} \cdot \left|\{i\,:\, i\in J \wedge 
v_i\geq p_{(k)}\}\right|$. In particular, $R_k(J,\vals)$ is equal to the mass of $J$ multiplied by $p_{(k)}$.
\end{definition}

\begin{lemma} \label{lem:matchingproperties}
The matching algorithm has the following properties:
\begin{enumerate}
\item If $J_t$ is nice, then $\mass_t(J_t)\leq  (1+2\epsilon) \mass_t(I_t)$.
\item For every constant $c_3>2$ let 
$$c_4=\frac{c_3}{c_3-2} \cdot(2+\epsilon) \cdot\epsilon.$$ 
If $J_t$ is not nice, then at least one of the following holds
\begin{align}
\label{eq:3}
\mass_t(J_t) &\leq c_3 \cdot \mass_t(I'_t) \\
\mass_t(J_t) &\leq c_4 \cdot \mass_t(I_t)
\end{align}
\end{enumerate}
\end{lemma}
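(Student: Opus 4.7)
My plan is to split into three cases based on the matching algorithm's behavior, using the primal interval construction rules to force bounds.

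First, I would establish a preliminary fact by induction on $t$ through the matching process: $\start(I'_t) < \start(J_t)$ always holds. This follows because $I'_t$ is obtained either by Rule 4 from $I_{t-1}$ (giving $\start(I'_t) = \e(I_{t-1}) \leq \e(J_{t-1}) < \start(J_t)$, exploiting that all candidates for $J_{t-1}$ fit inside $J_{t-1}$) or by Rule 3 from $I'_{t-1}$ (inheriting its starting point, which recurses to the same property).

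For Part 1 (nice case), the key is the maximality of $I_t$ among primal intervals at level $t$ contained in $J_t$. Applying Rule 1 to $I_t$ produces a primal interval at level $t$ starting at $\start(I_t)$ with mass $(1+\epsilon)\mass_t(I_t)$; by maximality it cannot fit inside $J_t$, so its end must exceed $\e(J_t)$, giving $\mass_t([\start(I_t), \e(J_t)]) < (1+\epsilon)\mass_t(I_t)$, hence $\mass_t((\e(I_t), \e(J_t)]) < \epsilon \mass_t(I_t)$. Rule 5 applied to $I_t$ symmetrically produces an interval ending at $\e(I_t)$ whose start must precede $\start(J_t)$, yielding $\mass_t([\start(J_t), \e(I_t)]) < (1+\epsilon)\mass_t(I_t)$. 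Since $J_t$ is the disjoint union of these two regions, summing gives $\mass_t(J_t) < (1+2\epsilon)\mass_t(I_t)$.

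For Part 2 (not nice case), I split on the matching outcome. If $J_t$ is unmatched ($\e(I'_t) > \e(J_t)$), the preliminary gives $\start(I'_t) < \start(J_t) \leq \e(J_t) < \e(I'_t)$, so $J_t \subseteq I'_t$ and inequality (3) holds trivially. If $J_t$ is matched with $\e(I'_t) \geq \start(J_t)$ (overlap subcase), I would apply Rule 2 to $I'_t$: the resulting primal interval at level $t$ starts inside $J_t$ and has mass $(1+\epsilon)\mass_t(I'_t)$, so by non-niceness it must extend beyond $\e(J_t)$, yielding $\mass_t((\e(I'_t), \e(J_t)]) < (1+\epsilon)\mass_t(I'_t)$. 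Combined with the trivial $\mass_t([\start(J_t), \e(I'_t)]) \leq \mass_t(I'_t)$ (since this region lies in $I'_t$), I obtain $\mass_t(J_t) \leq (2+\epsilon)\mass_t(I'_t)$, giving (3) for all $c_3 \geq 2+\epsilon$; for smaller $c_3$, pairing this with $\mass_t(I_t) \geq \mass_t(I'_t)$ and the defining formula for $c_4$ yields (4).

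The hard case is $J_t$ matched with $\e(I'_t) < \start(J_t)$: $I'_t$ lies entirely before $J_t$, and every candidate $I^{(k)}$ shares the start $\start(I'_t)$. Applying Rule 1 to $I_t$ gives $\mass_t([\start(I_t), \e(J_t)]) < (1+\epsilon)\mass_t(I_t)$, and since $I'_t \subseteq [\start(I_t), \start(J_t)-1]$ the mass $\alpha$ in this left region satisfies $\alpha \geq \mass_t(I'_t)$; subtracting gives $\mass_t(J_t) \leq (1+\epsilon)\mass_t(I_t) - \mass_t(I'_t)$. The main obstacle is to extract the stated dichotomy from this single inequality: when $\mass_t(I'_t)$ is close to $\mass_t(I_t)$ the bound immediately gives (4), but when $\mass_t(I'_t)$ is much smaller (many Rule 1 steps) we need extra information. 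For this, I would apply Rule 2 to the smallest index $k^*$ in the candidate chain with $\e(I^{(k^*)}) \geq \start(J_t)$, producing a level-$t$ primal interval starting inside $J_t$ of mass $(1+\epsilon)^{k^*+1}\mass_t(I'_t)$; non-niceness forces its end past $\e(J_t)$, yielding $\mass_t(J_t) \leq (2+\epsilon)(1+\epsilon)^{k^*}\mass_t(I'_t)$. The arithmetic choice $c_4 = \frac{c_3}{c_3-2}(2+\epsilon)\epsilon$ is tuned precisely so that whichever of the two bounds on $\mass_t(J_t)$ is tighter, at least one of (3) or (4) is guaranteed across all configurations of $(\mass_t(I'_t), \mass_t(I_t), k^*)$.
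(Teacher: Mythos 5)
Your Part 1 and the two easy branches of Part 2 (the unmatched case, and the case where $I'_t$ reaches into $J_t$) are correct and essentially the paper's own argument. The gap is in your final ``hard case.'' The only quantitative facts you actually establish there are $\mass_t(J_t)\le(1+\epsilon)\mass_t(I_t)-\mass_t(I'_t)$, $\mass_t(J_t)\le(2+\epsilon)(1+\epsilon)^{k^*}\mass_t(I'_t)$, and (implicitly) $\mass_t(I_t)=(1+\epsilon)^{K}\mass_t(I'_t)$ with $K\ge k^*$; the closing sentence that $c_4$ is ``tuned precisely'' so the dichotomy follows is asserted, not proved, and it does not follow from these facts. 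For instance, with $\epsilon=1/25$ and $c_3=10$ (so $c_4\approx 0.10$), the configuration $k^*=60$, $K=65$, $\mass_t(J_t)=12\,\mass_t(I'_t)$ is consistent with every inequality you derived, yet violates both $\mass_t(J_t)\le c_3\mass_t(I'_t)$ and $\mass_t(J_t)\le c_4\mass_t(I_t)$. In general your facts only yield $\mass_t(J_t)\le(1+\epsilon)\mass_t(I_t)$, i.e.\ a constant of order $1$ where the lemma demands a constant of order $\epsilon$ --- and the downstream Phase lemma genuinely needs $c_4=\Theta(\epsilon)$ so that the total charge $\alpha$ stays below $1/3$. (A smaller point: your $k^*$ need not exist, since all candidates may end before $\start(J_t)$, as Figure~\ref{fig:matching} illustrates; that subcase is easy, but it is not covered as written.)

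The missing idea is a packing bound that makes $\mass_t(I_t)$ dominate a whole geometric chain rather than a single candidate. Following the paper: starting from $K_0=I'_t$, apply Rule~2 repeatedly to get consecutive level-$t$ intervals $K_0,K_1,\ldots,K_{r+1}$, stopping when $K_{r+1}$ crosses $\e(J_t)$. Non-niceness forces $\start(K_r)<\start(J_t)$, hence $\mass_t(J_t)\le\mass_t(K_r)+\mass_t(K_{r+1})=(2+\epsilon)(1+\epsilon)^{r}\mass_t(I'_t)$; and the maximality of $I_t$ means its Rule~1 successor extends beyond $\e(J_t)$ and therefore contains all of $K_0,\ldots,K_r$, so $(1+\epsilon)\mass_t(I_t)\ge\sum_{j=0}^{r}\mass_t(K_j)=\frac{(1+\epsilon)^{r+1}-1}{\epsilon}\,\mass_t(I'_t)$. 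The geometric sum is a factor of roughly $1/\epsilon$ larger than the last chain element, and this is exactly where the $\epsilon$ in $c_4$ comes from: if $\mass_t(J_t)>c_3\mass_t(I'_t)$, the first bound gives $(1+\epsilon)^{r+1}\ge c_3/2$, and dividing the two bounds yields $\mass_t(J_t)\le\frac{c_3}{c_3-2}(2+\epsilon)\epsilon\,\mass_t(I_t)$. Your left-region subtraction and the single Rule~2 application at index $k^*$ cannot substitute for this sum; to repair the proof you should replace that step with the chain argument above.
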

\begin{proof}
The properties follow directly from the definitions and the process. For example, the first property, follows by the maximality of $I_t$ taking into account Rules 1 and 5; this is the only place in the proof where we use Rule 5. 

The second property is more involved and uses the fact that $J_t$ is not nice. Notice first that if $J_t$ is not matched, then $\mass_t(J_t)\leq \mass_t(I'_t) \leq c_3 \mass_t(I'_t)$ and the property holds. So now assume that $J_t$ is matched and
consider the sequence of intervals $K_0, K_1,\ldots, K_{r+1}$, produced when we start with $K_0=I'_t$ and repeatedly apply Rule 2; the sequence ends when the produced interval extends beyond $J_t$; that is, $\start(K_{r+1})<\e(J_t)<\e(K_{r+1})$ (See Figure~\ref{fig:structuralproperties}).

\begin{figure}
\centering
\includegraphics{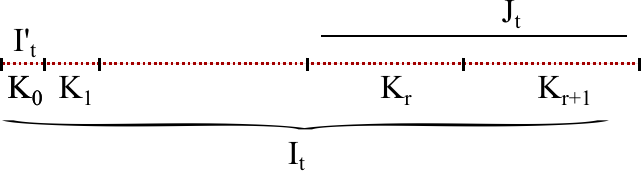}
\caption{Illustrating the proof of Lemma~\ref{lem:matchingproperties}.}
\label{fig:structuralproperties}
\end{figure}

Since $J_t$ is not nice, it does not include any of these intervals and in particular interval $K_r$, so we must have that $\mass_t(J_t)\leq \mass_t(K_r)+\mass_t(K_{r+1})$. This bounds the mass of $J_t$ by
$$\mass_t(J_t)\leq \left((1+\epsilon)^r+(1+\epsilon)^{r+1}\right)\mass_t(I'_t) =(2+\epsilon)(1+\epsilon)^r \mass_t(I'_t).$$

On the other hand, $I_t$ must contain almost all the mass of intervals $K_0,\ldots,K_r$. In particular, by the maximality of $I_t$, if we apply Rule 1 to $I_t$, we get an interval which extends beyond $J_t$, and it certainly contains the intervals $K_0,\ldots,K_r$. Thus the mass of $I_t$ is at least
\begin{align*}
\mass_t(I_t) &\geq \left(\mass_t(K_0)+\cdots+\mass_t(K_r)\right)/(1+\epsilon) \\ &= \left(1+(1+\epsilon)+\cdots+(1+\epsilon)^r\right)/(1+\epsilon) \cdot \mass_t(I'_t) \\
&=\left((1+\epsilon)^{r+1}-1\right)/\left(\epsilon(1+\epsilon)\right) \cdot\mass_t(I'_t).
\end{align*}

The lemma now follows from the above two bounds. Specifically, if $\mass_t(J_t)>c_3 \mass_t(I'_t)$, then $c_3\geq (2+\epsilon)(1+\epsilon)^r$. We can then bound 
\begin{align*}
\mass_t(J_t) &\leq (2+\epsilon)(1+\epsilon)^r \mass_t(I'_t) \\
& \leq (2+\epsilon)(1+\epsilon)^r \frac{\epsilon(1+\epsilon)}{(1+\epsilon)^{r+1}-1}\mass_t(I_t) \\
& \leq \frac{c_3}{c_3-2} \cdot(2+\epsilon) \cdot \epsilon \cdot \mass_t(I_t).
\end{align*}

\end{proof}

The essential part of proving the constant approximation ratio is a
charging argument. The idea is to charge each of the values $R_t(J_t, \vals)$ to
some value $R_s(I_s, \vals)$ of some nice interval $J_s$. To do this, we process the
intervals from left to right in phases. A phase begins
with a nice interval and ends exactly before the next nice interval (or at the
last interval). 

\begin{lemma}[Phase]
Let $s,\ldots s'-1$ be a phase, that is, $J_s$ is a nice interval and
none of the intervals $J_r$ for $r=s+1,\ldots, s'-1$ is nice. Then there is a contant $c$ such that
\begin{align}
\sum_{t=s}^{s'-1} R_t(J_t, \vals) &\leq c \cdot R_s(I_s, \vals) 
\end{align}
\end{lemma}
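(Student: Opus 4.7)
The plan is to split the sum into the nice-interval contribution at $t=s$ and the non-nice contributions at $s<t<s'$, then bound the latter as a geometric series in $R_s(I_s,\vals)$.

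First, since $J_s$ is nice, part~1 of Lemma~\ref{lem:matchingproperties} gives $\mass_s(J_s) \leq (1+2\epsilon)\mass_s(I_s)$, and multiplying by $p_{(s)}$ yields $R_s(J_s,\vals) \leq (1+2\epsilon)\,R_s(I_s,\vals)$. This term already fits inside the desired constant, so it remains only to bound $\sum_{t=s+1}^{s'-1} R_t(J_t,\vals)$.

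For each such $t$, part~2 of Lemma~\ref{lem:matchingproperties} yields $R_t(J_t,\vals) \leq c_3 R_t(I'_t,\vals) + c_4 R_t(I_t,\vals)$ (interpreting $I_t$ as $I'_t$ when $J_t$ is unmatched). I would then track the two revenue sequences $a_t = R_t(I'_t,\vals)$ and $b_t = R_t(I_t,\vals)$ across the phase using the matching rules. By Rules~3 and~4 the interval $I'_{t+1}$ has level one higher and mass $(1+\epsilon)$ times its parent ($I_t$ if $J_t$ is matched, $I'_t$ otherwise), so $a_{t+1} = \rho \cdot b_t$ where $\rho = (1+\epsilon)/w$; and by Rule~1, $b_t = (1+\epsilon)^{k_t} a_t$ for some $k_t \geq 0$ (equal to the number of Rule~1 applications used to produce $I_t$ from $I'_t$). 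Together with the initial condition $a_{s+1} = \rho\, R_s(I_s,\vals)$, this lets me express $a_t$ and $b_t$ as $R_s(I_s,\vals)$ times a product of $\rho$'s and $(1+\epsilon)^{k_r}$'s.

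Finally, I would sum the series. Writing $S = \sum_{t>s}(a_t + b_t)$ and substituting the recursion gives $(1-\rho)S = \rho\, R_s(I_s,\vals) + \rho \sum_{t>s}(b_t - a_t)$, reducing everything to bounding the ``excess'' terms $b_t - a_t = ((1+\epsilon)^{k_t}-1) a_t$ contributed by Rule~1. For the fixed parameters $w=25$ and $\epsilon=1/25$, the ratio $\rho$ is very small (around $0.04$), which gives a comfortable margin.

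The main obstacle is the unbounded factor $(1+\epsilon)^{k_t}$: a single matching step can in principle multiply the running revenue by $\rho \cdot (1+\epsilon)^{k_t}$, which exceeds $1$ once $k_t$ is large enough, so a naive geometric argument fails. The key structural fact that rescues the proof is that a large $k_t$ forces $I_t$ to exhaust almost all of the positions of $J_t$ (by the maximality condition built into the matching process: if $I_t$ left room on the right, another Rule~1 application would have produced a strictly larger legal candidate). Consequently the next interval $I'_{t+1}$, which by Rule~4 starts at $\e(I_t)$, is pushed far to the right and, over the remainder of the phase, cannot compound its growth any further. Amortizing the $(1+\epsilon)^{k_t}$ excess against the positional room it consumes in $J_t$, and using the slack between $w$ and $1+\epsilon$, yields $S \leq c'\, R_s(I_s,\vals)$ for an absolute constant $c'$, which combined with the nice-interval bound completes the proof.
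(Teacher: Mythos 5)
Your setup (splitting off the nice term via part~1 of Lemma~\ref{lem:matchingproperties}, combining the two cases of part~2 into $R_t(J_t,\vals)\le c_3R_t(I'_t,\vals)+c_4R_t(I_t,\vals)$, and tracking the candidate masses through the recursion $a_{t+1}=\tfrac{1+\epsilon}{w}b_t$, $b_t=(1+\epsilon)^{k_t}a_t$) is sound and matches the first half of the paper's charging argument. The gap is in the step you yourself flag as the crux: you propose to prove the purely structural bound $\sum_{t>s}\bigl(c_3R_t(I'_t,\vals)+c_4R_t(I_t,\vals)\bigr)\le c'R_s(I_s,\vals)$ by ``amortizing'' the $(1+\epsilon)^{k_t}$ excess against positional room in $J_t$. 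No such unconditional bound holds, and the amortization heuristic does not rescue it. The reason is that $\mass_t(I_t)$ counts \emph{every} value $\ge p_{(t)}$ in the span of $I_t$, and $I_t$ typically starts to the left of $J_t$, under earlier higher-priced intervals; values there lying in $[p_{(t)},p_{(r)})$ for their own level $r<t$ contribute to no $R_r(J_r,\vals)$ yet inflate $R_t(I_t,\vals)$ without limit. In particular, if such values happen to fall in $B$, the optimal pricing of $\vals^A$ is blind to them, so nothing caps their number; hence $R_t(I_t,\vals)$ can exceed any constant multiple of $R_s(I_s,\vals)$ even though $J_t$ is matched and not nice. Your maximality observation only says $I_t$ captures a $1/(1+\epsilon)$ fraction of the mass between $\start(I'_t)$ and $\e(J_t)$; it neither bounds $b_t$ against $R_s(I_s,\vals)$ nor is ``future compounding'' the real difficulty---the single term $c_4R_t(I_t,\vals)$ already need not be comparable to $R_s(I_s,\vals)$.

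The paper never needs the bound you are attempting. It only needs the total charge coefficient $\alpha=c_4+c_3\tfrac{1+\epsilon}{w-1-\epsilon}<1/3$ in \eqref{eq:main:1}, which leaves the self-referential inequality $\sum_t R_t(J_t,\vals)\le R_s(J_s,\vals)+\alpha\sum_t R_t(I_t,\vals)$. It then closes the loop with $\sum_t R_t(I_t,\vals)\le 3\sum_t R_t(J_t,\vals)$, and this is exactly where the two ingredients absent from your proposal enter: the optimality of the price vector $p$ for $\vals^A$ (lowering all prices of the phase to level $s'-1$ cannot increase the $\vals^A$-revenue, giving \eqref{eq:main:2x}) and the balance property of the primal intervals (Lemma~\ref{lemma:balance}), which converts that $\vals^A$ statement into the $\vals$ statement \eqref{eq:main:2}. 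The lemma is only true, and only used, within the conditional analysis where $p$ is optimal for $\vals^A$ and all primal intervals are balanced; a proof that ignores both cannot succeed. To repair your argument you would need to replace the amortization step by an inequality relating $\sum_t R_t(I_t,\vals)$ back to $\sum_t R_t(J_t,\vals)$, which is precisely the paper's route.
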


\begin{proof}
We charge each interval $J_t$, $t=s+1,\ldots,s'-1$ to some primal interval, according to the cases of the previous lemma, as follows:
\begin{enumerate}
\item If $\mass_t(J_t) \leq c_3 \cdot \mass_t(I'_t)$, we charge it to the previous matched interval. Specifically, if  $\pre{t}$ denotes the previous matched interval (for example, in Figure~\ref{fig:matching}, $\pre{{r+2}}=r$), then
$$R_t(J_t,\vals) \leq c_3 R_t(I'_t, \vals) \leq c_3 \left(\frac{1+\epsilon}{w}\right)^{t-\pre{t}} R_{\pre{t}}(I_{\pre{t}}, \vals).$$
\item Otherwise we charge $J_t$ to $I_t$. Specifically, by Lemma~\ref{lem:matchingproperties} we have 
$$R_t(J_t,\vals)\leq c_4 R_t(I_t,\vals).$$
\end{enumerate}

If we sum the above for $t=s+1,\ldots,s'-1$, we get
\begin{equation}
\label{eq:main:1}
\sum_{t=s+1}^{s'-1} R_t(J_t,\vals) \leq \sum_{t=s}^{s'-1} \alpha_t \cdot R_t(I_t,\vals),
\end{equation}
for some $\alpha_t$ which is the sum of all the above charges. The
crucial property of the parameters is that the total charge to every interval is low (less than $1/3$) in
every posible scenario. Specifically, the total charge to $\alpha_t$
can be at most $c_4$  (Case 2), plus at most $c_3\cdot\sum_{i=1}^{\infty}
\left(\frac{1+\epsilon}{w}\right)^i=c_3\cdot \frac{1+\epsilon}{w-1-\epsilon}$ from Case 1. The total charge is
\begin{equation}
\label{eq:main:alpha}
\alpha_t\leq \alpha= c_4+c_3\cdot\frac{1+\epsilon}{w-1-\epsilon},
\end{equation}
which is less than $1/3$ if we select the parameters appropriately.

We now use the optimality of price vector $p$ to bound $\sum_{t=s}^{s'-1} R_t(J_t,\vals)$.  Consider decreasing the prices in the interval $\cup_{t=s}^{s'-1} J_t$  to level $s'-1$. By
  the optimality of the price profile $p$, this cannot give better revenue for $\vals^A$, and
  therefore
\begin{equation} \label{eq:main:2x}
\sum_{t=s}^{s'-1} R_t(I_t,\vals^A)\leq \sum_{t=s}^{s'-1}
R_t(J_t,\vals^A).
\end{equation}
We want now to combine \eqref{eq:main:1} and \eqref{eq:main:2x}, but one involves $\vals$ and the other $\vals^A$, so we employ the balance property to transform \eqref{eq:main:2x} to:
\begin{equation}
\label{eq:main:2}
\sum_{t=s}^{s'-1} R_t(I_t,\vals) \leq 3 \sum_{t=s}^{s'-1} R_t(I_t,\vals^A)\leq 3 \sum_{t=s}^{s'-1} R_t(J_t,\vals^A) \leq 3 \sum_{t=s}^{s'-1} R_t(J_t,\vals).
\end{equation}

We will also need the first property of Lemma~\ref{lem:matchingproperties}: 
\begin{equation}
\label{eq:main:3}
R_s(J_s,\vals)\leq (1+2\epsilon) R_s(I_s, \vals).
\end{equation}
If we now combine Equations \eqref{eq:main:1}, \eqref{eq:main:2}, and \eqref{eq:main:3} (multiply \eqref{eq:main:2} by $\alpha$, take their sum, and divide the result by $1-3\alpha$), we get
\begin{align*}
\sum_{t=s}^{s'-1} R_t(J_t,\vals) \leq \frac{1+2\epsilon}{1-3\alpha} R_s(J_s,\vals)
\end{align*}
(this is were it is crucial that $\alpha$ is less than $1/3$). Letting now $c=\frac{1+2\epsilon}{1-3\alpha}$, we get the lemma.
\end{proof}

By simply summing the inequalities of the previous lemma for all nice intervals, we get:

\begin{lemma} \label{lemma:main}
Let $J_{q_0},\ldots, J_{q_{m'}}$ be the nice intervals. Assuming that the first interval is matched (Definition \ref{def:assumption2}), or equivalently that $q_0=0$, there is a constant $c$ such that
\begin{align} 
\sum_{j=0}^m R_j(J_j,\vals) \leq c\cdot\sum_{k=0}^{m'} R_{q_k}(I_{q_k},\vals).
\end{align}
\end{lemma}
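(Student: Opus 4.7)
The plan is to derive this statement as an immediate corollary of the Phase lemma by partitioning the entire index range $\{0,1,\ldots,m\}$ into consecutive phases determined by the nice intervals. Because the second running assumption gives $q_0 = 0$, every interval $J_j$ falls into exactly one phase: for each $k = 0, \ldots, m'$, define phase $k$ to be the set of indices $\{q_k, q_k+1, \ldots, q_{k+1}-1\}$, with the convention that $q_{m'+1} = m+1$ for the final phase. By the definition of the nice intervals $J_{q_0}, \ldots, J_{q_{m'}}$, within each phase only the first interval is nice and all later ones are not, so each phase satisfies the hypothesis of the Phase lemma.

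The main step is then to apply the Phase lemma to each phase $k$, obtaining
$$\sum_{t=q_k}^{q_{k+1}-1} R_t(J_t, \vals) \leq c \cdot R_{q_k}(I_{q_k}, \vals),$$
and then to sum these $m'+1$ inequalities. Since the phases partition $\{0,1,\ldots,m\}$, the left-hand sides telescope into $\sum_{j=0}^{m} R_j(J_j, \vals)$, while the right-hand sides add up to exactly $c \cdot \sum_{k=0}^{m'} R_{q_k}(I_{q_k}, \vals)$. This gives the claimed inequality with the same constant $c$ provided by the Phase lemma.

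There is no real obstacle here beyond bookkeeping; the only subtlety is that we need $q_0 = 0$ so that the very first interval $J_0$ is itself nice and begins a phase, rather than belonging to a phantom phase that would be unaccounted for. This is precisely what the second running assumption (Definition~\ref{def:assumption2}) ensures, and it is why that assumption was made; it will be removed later when the pieces of the analysis are combined, at the cost of a constant factor in the competitive ratio.
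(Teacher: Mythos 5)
Your proof is correct and matches the paper's argument: the paper obtains Lemma~\ref{lemma:main} exactly by summing the Phase lemma inequalities over all phases, which partition $\{0,\ldots,m\}$ thanks to the assumption $q_0=0$. Your additional bookkeeping about the final phase and the role of the second running assumption is consistent with what the paper intends.
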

\begin{figure}
\centering
\includegraphics{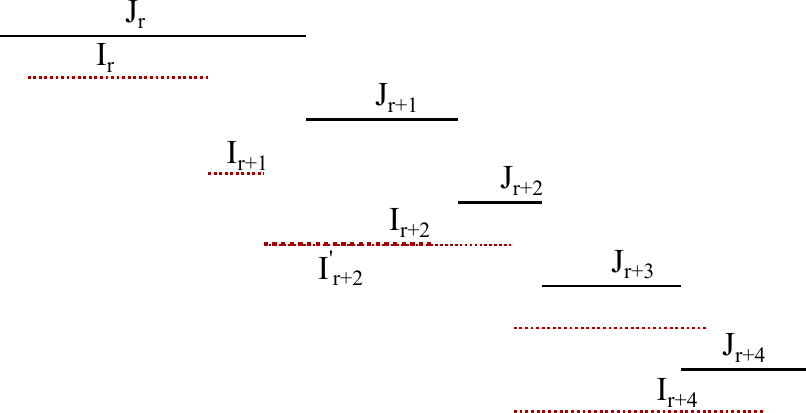}
\caption{An example of possible charging. $J_r$ is a nice interval. $J_{r+1}$ is charged either to $I_r$ (if the mass of $I'_{r+1}$ is small) or to $I_{r+1}$ (otherwise). Similarly $J_{r+2}$ is charged either to $I_{r+1}$ or to $I_{r+2}$ depending on the mass of $I'_{r+2}$ (shown). $J_{r+3}$ is charged to $I_{r+2}$ (because it is not matched). $J_{r+4}$ is
  charged either to $I_{r+2}$ or to $I_{r+4}$, depending on the mass of $I'_{r+4}$.}
\label{fig:main}
\end{figure}

The previous lemma relates the revenue from the whole sequence to the
revenue from the matching primal intervals of nice intervals.  Because on every primal balanced interval the revenues of $\vals$, $\vals^A$, and $\vals^B$ are all within a constant factor, we can immediately relate $R(p,\vals^A)$ to $R(p,\vals^B)$.
\begin{lemma} \label{lemma:AvsB}
With the assumptions of the previous lemma,
$$R(p,\vals^A) \leq 3c\cdot R(p,\vals^B).$$
\end{lemma}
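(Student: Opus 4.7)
The plan is to chain together three inequalities: (i) $R(p,\vals^A)$ is bounded by the total revenue over all price levels on the full profile $\vals$; (ii) Lemma~\ref{lemma:main} reduces that total to a sum over matching primal intervals of nice levels; and (iii) balancedness together with disjointness of the matching intervals transports the resulting sum to $R(p,\vals^B)$.

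First I would observe the trivial inequality $R(p,\vals^A) \le R(p,\vals) = \sum_{j=0}^{m} R_j(J_j,\vals)$, which holds coordinatewise since $\vals^A_i \le \vals_i$ for every $i$, so every bidder in $\vals^A$ that clears a level-$j$ price also clears it in $\vals$. Applying Lemma~\ref{lemma:main} then yields
\begin{equation*}
R(p,\vals^A) \;\le\; c \sum_{k=0}^{m'} R_{q_k}(I_{q_k},\vals),
\end{equation*}
where the sum is over the nice levels and each $I_{q_k} \in \primal_{q_k}$ is the primal interval matched to $J_{q_k}$.

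Next I would invoke Lemma~\ref{lemma:balance}: with probability at least $31/32$ every primal interval is balanced, so at least one third of the elements of $I_{q_k}\cap N_{q_k}$ belong to $B$. By definition of $R_{q_k}$, this means $R_{q_k}(I_{q_k},\vals^B) \ge \tfrac{1}{3} R_{q_k}(I_{q_k},\vals)$, i.e.\ $R_{q_k}(I_{q_k},\vals) \le 3 R_{q_k}(I_{q_k},\vals^B)$. The final step is to sum these contributions back into $R(p,\vals^B)$: since each $I_{q_k} \subseteq J_{q_k}$ and the price vector $p$ charges the uniform price $p_{(q_k)}$ on all of $J_{q_k}$, the quantity $R_{q_k}(I_{q_k},\vals^B)$ is exactly the piece of $R(p,\vals^B)$ collected from positions in $I_{q_k}$. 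The nice intervals $J_{q_k}$ are disjoint (they come from the partition of $\{1,\dots,n\}$ into levels), hence so are the $I_{q_k}$, and therefore
\begin{equation*}
\sum_{k=0}^{m'} R_{q_k}(I_{q_k},\vals^B) \;\le\; R(p,\vals^B).
\end{equation*}
Combining the three inequalities gives $R(p,\vals^A) \le 3c\, R(p,\vals^B)$, as claimed.

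The main obstacle is really just a bookkeeping one: verifying that the matching produced by the process in Section~\ref{sec:matching-intervals} outputs genuinely disjoint primal intervals $I_{q_k}$ (ensuring the last summation step is valid) and that the balancedness hypothesis of Lemma~\ref{lemma:balance} can be invoked uniformly over all matched primal intervals. Both are immediate from the construction --- the matching of nice $J_{q_k}$ places $I_{q_k}$ inside $J_{q_k}$, and Lemma~\ref{lemma:balance} holds simultaneously for every primal interval on the high-probability event we condition on --- so no new work is needed beyond the three-line chain above.
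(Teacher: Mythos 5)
Your proof is correct and follows essentially the same chain as the paper's: the trivial coordinatewise bound $R(p,\vals^A)\le\sum_j R_j(J_j,\vals)$, then Lemma~\ref{lemma:main}, then the balance condition giving the factor $3$, then $I_{q_k}\subseteq J_{q_k}$ to return to $R(p,\vals^B)$. The only cosmetic difference is that you sum the disjoint pieces $R_{q_k}(I_{q_k},\vals^B)$ directly into $R(p,\vals^B)$, whereas the paper first passes through $R_{q_k}(J_{q_k},\vals^B)$ and then over all levels; these are the same step.
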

\begin{proof}
We have the following derivations 
\begin{align*}
R(p,\vals^A) &= \sum_{j=0}^m R_j(J_j,\vals^A) \\
& \leq \sum_{j=0}^m R_j(J_j,\vals) \\
& \leq c \cdot \sum_{k=0}^{m'} R_{q_k}(I_{q_k},\vals) \\
& \leq 3c \cdot \sum_{k=0}^{m'} R_{q_k}(I_{q_k},\vals^B) \\
& \leq 3c \cdot \sum_{k=0}^{m'} R_{q_k}(J_{q_k},\vals^B) \\
& \leq 3c \cdot \sum_{j=0}^{m} R_j(J_j,\vals^B) \\
& \leq 3c \cdot R(p,\vals^B)
\end{align*}
The first inequality is trivial because $\vals^A$  has smaller values than $\vals$, the second inequality comes from the main lemma (Lemma~\ref{lemma:main}), the third inequality comes from the balance condition, the fourth inequality is based on the fact that $I_{q_k}$ is contained in $J_{q_k}$.
\end{proof}

\subsection{Stitching everything together}
\label{sec:stich-everyth-togeth}

In this subsection, we revisit and remove the First Running Assumption. For simplicity, we keep the Second Running Assumption and we deal with it later.

\begin{lemma}\label{l:ea}
For every valuation profile~$\vals$, 
$$\Pr\left[ R(p,\vals^A) \ge \frac{1}{3w} \cdot \M(\vals) \right] \ge \tfrac{1}{16}.$$
\end{lemma}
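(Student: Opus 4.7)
The plan is to exhibit a single concrete discrete monotone price vector $\pi$, derived from the optimal benchmark $\ms(\vals)$, whose revenue on $\vals^A$ is already a constant fraction of $\M(\vals)$ with positive probability, and then to transfer this lower bound to $R(p,\vals^A)$ via the optimality of $p$ on $\vals^A$.

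First I would let $\pi$ be obtained from $\ms(\vals)$ by rounding each entry down to the largest discrete level in $\{\vals^{(2)}/w^k : k \ge 0\}$ that does not exceed it. This rounding preserves monotonicity, keeps $\pi_1 \le \vals^{(2)}$, and shrinks each price by a factor of at most $w$; since every bidder who accepted $\ms(\vals)$ still accepts the smaller price $\pi_i \ge \ms_i/w$, we get $R(\pi,\vals) \ge \M(\vals)/w$. Next I would condition on the First Running Assumption (FRA), the event that the two highest-valued bidders both land in $A$, which holds with probability $1/4$. Under FRA, ${\vals^A}^{(2)} = \vals^{(2)}$, so the discrete levels used to define $p$ on $\vals^A$ coincide with those of $\pi$; thus $\pi$ is a feasible candidate in the optimization defining $p$, and $R(p,\vals^A) \ge R(\pi,\vals^A)$.

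The heart of the argument is the conservation identity
$$R(\pi,\vals^A) + R(\pi,\vals^B) = R(\pi,\vals),$$
which follows because each bidder belongs to exactly one of $A$ or $B$. Under FRA, the two top bidders lie deterministically in $A$ and contribute nothing to $R(\pi,\vals^B)$, while every other bidder is in $B$ independently with probability $1/2$; hence $\expect{R(\pi,\vals^B) \mid \text{FRA}} \le R(\pi,\vals)/2$. Applying Markov's inequality to the nonnegative quantity $R(\pi,\vals^B)$ then yields $\prob{R(\pi,\vals^B) \ge \tfrac{2}{3} R(\pi,\vals) \mid \text{FRA}} \le 3/4$, which via the conservation identity is equivalent to $\prob{R(\pi,\vals^A) \ge R(\pi,\vals)/3 \mid \text{FRA}} \ge 1/4$. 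Combining with $R(\pi,\vals)/3 \ge \M(\vals)/(3w)$ and $\prob{\text{FRA}} = 1/4$ gives $\prob{R(p,\vals^A) \ge \M(\vals)/(3w)} \ge 1/16$.

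The step I expect to be least obvious is the conservation trick: a direct reverse-Markov or Paley--Zygmund bound on $R(\pi,\vals^A)$ only delivers concentration when $\M(\vals)/\vals^{(2)}$ is large, and would force an awkward case split paired with the simple lower bound $R(p,\vals^A) \ge 2\vals^{(2)}$ that the constant price $\vals^{(2)}$ provides under FRA. Routing the argument through $R(\pi,\vals^B)$ with ordinary Markov avoids that case analysis entirely and delivers the clean $1/4 \cdot 1/4$ factorization.
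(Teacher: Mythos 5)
Your proof is correct and follows essentially the same route as the paper's: condition on the two highest bidders landing in $A$ (probability $1/4$), apply Markov's inequality to the revenue the optimal monotone prices collect from $B$ to conclude that $A$ retains at least a third of $\M(\vals)$ with probability $1/4$, and pay a factor $w$ for rounding prices down to the discrete levels. The only cosmetic difference is that you round to the discrete grid before the Markov step whereas the paper rounds afterwards; the resulting bound of $\M(\vals)/(3w)$ with probability $1/16$ is identical.
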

\begin{proof} 
Let $\pricesR^*$ be the optimal prices for $\M(\vals)$.
With probability~$1/4$, the bidders with the highest and
second-highest valuations lie in~$A$, i.e., $\vals^{(2)} = {\vals^A}^{(2)}$.   
Given this event, the conditional
expected revenue from bidders in~$A$ and~$B$ 
under the price vector $\pricesR^*$ is at least $\M(\vals)/2 $ and at
most~$\M(\vals)/2$, respectively.  By Markov's inequality,
the conditional expected revenue from bidders in~$A$ under~$\pricesR^*$
is at least $\tfrac{1}{3} \M(\vals)$ with probability at
least~$\tfrac{1}{4}$.

The optimal revenue $R(p,\vals^A)$ from $A$ cannot be worse than the revenue from the pricing vector $\pricesR^*$ scaled down to the next power $\frac{\vals^{(2)}}{w^k}$. Since with this scaling down, we loose at most a factor of $w$ in the revenue, we conclude that with probability at least $\tfrac{1}{16}$, $R(p, \vals^A) \ge \M(\vals)/3w$.
\end{proof}

If we choose appropriately the constant $s_0$, 
Lemma \ref{lemma:balance} ensures that for every constant
$\epsilon\in (0,1]$, and every valuation profile~$\vals$, every primal interval is balanced with probability at least $31/32$. (Note that for small $n$, this is vacuously true since there are no primal intervals).

The probability that all positive events (that the two higher values are in $A$, that $\M(\vals^A) \ge \frac{1}{3w} \cdot \M(\vals)$, and that all primal intervals are balanced) hold is at least $1 - \left( \frac{1}{32} + \frac{15}{16}\right)= \frac{1}{32}$.  So, in the analysis of the performance of the \algfull auction, we can assume that these events hold, which will increase the approximation ratio by a factor of $32$. 

By Lemma~\ref{lemma:AvsB}, we have that there are parameters $\epsilon$, $w$ and $s_0$ such that
$$R(p,\vals^B) \geq \frac{1}{3c} R(p, \vals^A) \geq \frac{1}{3c}
\frac{1}{3w} \M(\vals)= c_1 \cdot \M(\vals) ,$$
(where $c_1=1/(9cw)$), which establishes the theorem under the assumption that the first interval $J_0$ is nice.

\subsection{Padding}
\label{sec:intervals-1}

In the previous subsection, we analyzed the algorithm and bounded its
revenue under the assumption that the first interval $J_0$ is nice, i.e., it contains a
primal interval. We now show how to remove this assumption by padding
$\vals$ on the left with a sequence of $\vals^{(2)}$ values which will guarantee
that the assumption holds. Then it suffices to show that the
padding does not affect the revenue significantly.

In particular, let $\hv$ be the sequence consisting of $s'_0=10 s_0$ values
$\vals^{(2)}$ followed by $\vals$. With probability very close to 1, at
least $s_0$ of these values go to $\hv^A$, and therefore the interval
$J_0$ contains the initial primal interval\footnote{We need to select $s_0'$ large enough so that this probability is greater than $1-1/32$ to guarantee that all assumptions and events of the previous subsection hold with posititive probability.}. Notice that
$\M(\hv)=\M(\vals)+s'_0 \cdot
\vals^{(2)}$ because every optimal decreasing pricing can sell to the
first $s_0$ bidders at price $\vals^{(2)}$ without affecting the pricing
for the remaining bidders; this is because the prices for the
remaining bidders cannot exceed $\vals^{(2)}$. Similar constraints hold for
$\M(\hv^A)$ and $\M(\hv^B)$. 

Let $\algo_M$ be the expected revenue when we apply the pricing scheme
to $\vals^A$ and $\vals^B$ and let $\algo_F$ be the expected revenue when we
apply an algorithm which is competitive against $\F2$. The
expected revenue of the algorithm is $(\algo_M+\algo_F)/2$.  The
revenue extracted from the part against $\F2$ will play a crucial
role in relating the revenue of the original sequence with the revenue
with the padded sequence. For the padded sequence, let $\halgo_M$ be the
corresponding expected revenue for the padded sequence, that is, when
we apply the pricing scheme to $\vals^\hA$ and $\vals^\hB$.  From the previous
subsection, we have that
$$\halgo_M\geq c_1\cdot\left( \M(\vals)+s'_0 \cdot \vals^{(2)}\right),$$
for some constant $c_1$. Also, it is easy to see that the prices of the algorithm applied
to $\hv$ match the prices of
the algorithm applied to $\vals$, so we have that $\halgo_M \leq \algo_M
+s'_0 \cdot
\vals^{(2)}$. Putting the last two inequalities together, we get
\begin{align*}
\algo_M \geq c_1\cdot\left( \M(\vals)+s'_0 \cdot \vals^{(2)}\right)
- s'_0
\cdot \vals^{(2)}  \\ = c_1\cdot \M(\vals)- (1-c_1)\cdot s'_0 \cdot \vals^{(2)}.
\end{align*}
The
second term is the reason for targeting both $\F2$ and $\M$
and explains the need to run a competitive algorithm against $\F2$
with probability $1/2$.

To account for the second term, we observe that with probability
$1/2$, the algorithm gets revenue at least $c_1'\cdot \F2(\vals)\geq c_1' \cdot 2
\vals^{(2)}$, for some constant $c_1'$. That is,
$$\algo_F\geq c_1' \cdot 2 \vals^{(2)}.$$
Combining the two bounds for $\algo$, we get
\begin{align*}
\algo& \geq \frac{1}{2} \algo_M \\
\algo& \geq \frac{1}{2} \algo_F \\
\algo& \geq \frac{(1-c_1)\cdot s'_0\cdot
  \left(\frac{1}{2} \algo_F\right)+2c_1' \cdot \left(\frac{1}{2} \algo_M
  \right)}{(1-c_1)\cdot s'_0+2c_1'} \\
&= \frac{c_1\cdot c_1'}{(1-c_1)\cdot s'_0+2c_1'} \cdot
\M(\vals),
\end{align*}
which shows that the algorithm has constant approximation ratio.

\section{Multi-Unit Auctions}

In this section we extend our results to multi-unit auctions with
limited supply.  To develop this theory, we extend the monotone price
benchmark $\M$ to the case of an arbitrary number~$k \ge 2$ of units for sale.
We call a price vector~$\prices$ {\em feasible} for the
valuation profile~$\vals$ and supply limit~$k$ if: (i) $p_1 \ge p_2
\ge \cdots \ge p_n$; 
(ii) all prices are at most the second-highest valuation of~$\vals$;
and (iii) there are at most~$k$ bidders~$i$ with $\vali > \pricei$.
We allow our benchmark to break ties in an optimal way.  More
precisely, the revenue earned by a feasible price vector is $\sum_{i
  \,:\, \vali > \pricei} \pricei$ plus, if there are~$\ell$ items
remaining, the sum of the prices offered to up to~$\ell$ bidders~$i$
with $\vali = \pricei$.  We define the {\em $k$-unit monotone price
  benchmark} $\MK(\vals)$ as the maximum revenue obtained by a
price vector that is feasible for~$\vals$ and~$k$.

There are two main issues to address.
The first issue is to identify a class of priors ${\cal F}_i$
such that  $\MK(\vals)$  is a meaningful  benchmark for prior-free
approximation, i.e., it simultaneously approximates all optimal
auctions in multi-unit Bayesian settings with priors drawn from the class.
The challenge, relative to the unlimited-supply setting introduced
  in~\cite{LR12}, is that limited-supply Bayesian optimal auctions
  exhibit more complex behavior than unlimited-supply ones.
  Section~\ref{ss:just} shows, essentially, that the
  benchmark~$\MK(\vals)$ is meaningful for any valuation
  distributions that have pointwise ordered ironed virtual valuations
The second issue is to design auctions competitive with the
benchmark~$\MK(\vals)$.
We accomplish this through a general reduction, showing how to build a
limited-supply auction that is~$O(1)$-competitive w.r.t.~$\MK(\vals)$
from a digital goods auction that is~$O(1)$-competitive w.r.t.~$\M$.

\subsection{Justifying the $k$-Unit Monotone Price Benchmark}
\label{ss:just}

The goal of this section is to prove that every prior-free auction
that is~$O(1)$-competitive with the benchmark~$\MK(\vals)$ has
expected revenue at least a constant fraction of optimal in every
Bayesian multi-unit environment with valuation distributions lying in
a prescribed class.  Making this precise requires some terminology and
facts from the theory of Bayesian optimal auction design, as developed
by Myerson~\cite{M81}.  See also the exposition by Hartline~\cite{omd}.

Consider a bidder with valuation drawn from a prior distribution~$\dist$
with positive and continuous density~$f$ on some interval.  The {\em
  virtual   value}~$v$ at a point~$v$ in the support is defined as
$$\phi(v)= v - \frac{1 - {\cal F}(v)}{f(v)}.$$
For example, if~$\dist$ is the uniform distribution on $[0,a]$, then the
corresponding virtual valuation function is $\phi(v) = 2v-a$.

For clarity, we first discuss the case of {\em regular} distributions,
meaning distributions with nondecreasing virtual valuation functions.
In this case, the Bayesian optimal auction awards items to the (at
most~$k$) bidders with the highest positive virtual valuations.  The
payment of a winning bidder is the minimum bid at which it would
continue to win (keeping others' bids the same).  That is, if the
$(k+1)$th highest virtual valuation is~$z$, then every winning
bidder~$i$ pays~$\phi_i^{-1}(\max\{0,z\})$.  For these prices to be
related to the monotone price benchmark, we need to impose
conditions on the $\phi_i^{-1}(z)$'s.  This contrasts with
unlimited-supply settings, where restricting the~$\phi^{-1}_i(0)$'s
--- that is, the monopoly reserve prices --- to be nonincreasing
in~$i$ is enough to justify the monotone-price benchmark~\cite{LR12}.  
Since the $(k+1)$th highest virtual valuation could be anything, the
natural extension of the condition in~\cite{LR12} is to
restrict~$\phi^{-1}_i(z)$ to be nonincreasing in~$i$ for every
non-negative number~$z$.

Accommodating irregular distributions, for which the optimal Bayesian
auction is more complicated, presents additional
complications.  Each virtual valuation function~$\phi_i$ is
replaced by the ``nearest nondecreasing approximation'', called the
{\em ironed virtual valuation function}~$\ivvi$.  The optimal auction
awards the items to the (at most~$k$) bidders with the highest
positive ironed virtual valuations.  Since ironed virtual valuation
functions typically have non-trivial constant regions, ties can occur,
and we assume that ties are broken randomly.  That is, if
there are~$k$ items, a group~$S$ of bidders with identical ironed
virtual values~$z > 0$, $\ell < k$ bidders with ironed virtual value
greater than~$z$, and~$\ell+|S| > k$, then $k-\ell$ winners from~$S$
are chosen uniformly at random.

We call valuation distributions~$\dist_1,\ldots,\dist_n$ {\em pointwise
  ordered} if $\ivvi^{-1}(z)$ is nonincreasing in~$i$ for every
  non-negative~$z$.\footnote{Since~$\ivvi$ is continuous and
  nondecreasing, $\ivvi^{-1}(z)$ is an interval.  If the inverse image
  has multiple points, we define $\ivvi^{-1}(z)$ by the infimum.  If
  the inverse image is empty, we define $\ivvi^{-1}(z)$ as the left or
  right endpoint of the distribution's support, as appropriate.}
The motivating parametric examples discussed
  earlier --- uniform distributions with intervals~$[0,h_i]$ and
nonincreasing $h_i$'s, exponential 
distributions with nondecreasing rates, and Gaussian distributions with
nonincreasing means --- are pointwise ordered in this sense.

We also require a second condition, which we inherit from the standard
i.i.d.\ unlimited-supply setting.  The issue is that, with arbitrary
irregular distributions, no prior-free auction can be simultaneously
near-optimal in all Bayesian environments, even with i.i.d.\ bidders
and unlimited supply.\footnote{Informally, consider valuation
distributions that 
take on only two values, one very large (say~$M$) and the other~0.
Suppose the probability of having a very large valuation is very small
(say~$1/n^2$).  If the distribution is known, the optimal auction uses
a reserve price of~$M$ for each bidder.  Elementary arguments, as
in~\cite{HR08}, show that no single auction is near-optimal for all
values of~$M$.}  Various mild conditions are sufficient to rule out
this problem; see~\cite{HR08} for a discussion.  Here, for simplicity,
we restrict attention to {\em well-behaved} Bayesian multi-unit
environments, meaning that the Bayesian optimal auction derives at
most a constant fraction (90\%, say) of its revenue from outcomes 
in which some winner is charged a price higher than the second-highest
valuation.  (Such a winner is necessarily the bidder with the highest
valuation.)  Standard distributions always yield
well-behaved environments.  Even pathological distributions produce
well-behaved environments provided the market is sufficiently large
(e.g., there are enough bidders drawn i.i.d.\ from each of the
distributions).

Our main result in this section
 is that approximating the $k$-unit monotone price
benchmark guarantees simultaneous approximation of the optimal auction
in all well-behaved Bayesian multi-unit environments with pointwise
ordered distributions.

\begin{theorem}
\label{thm:irregular-prior-free}
If the expected revenue of the multi-unit auction~$\A$ is at least a
constant fraction of~$\MK(\vals)$ on every input, then, in every
well-behaved multi-unit Bayesian environment with pointwise ordered
distributions, the expected revenue of~$\A$ is at least a constant
fraction of that of the optimal auction for the environment.
\end{theorem}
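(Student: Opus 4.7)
The plan is to establish the inequality $\expect[\vals]{\mathrm{OPT}(\vals)} \le O(1) \cdot \expect[\vals]{\MK(\vals)}$ in every well-behaved pointwise-ordered Bayesian environment, where $\mathrm{OPT}$ is the expected revenue of the Bayesian optimal auction on $\vals$ drawn from $\dist_1 \times \cdots \times \dist_n$. Combined with the pointwise hypothesis $\expect{\A(\vals)} \ge \Omega(1)\cdot\MK(\vals)$ and an outer expectation over $\vals$, this yields the theorem.

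First I would unpack $\mathrm{OPT}$ using Myerson. The optimal mechanism awards items to at most the $k$ bidders with the largest positive ironed virtual values $\ivvi(v_i)$, breaks ties uniformly, and charges each winner its threshold bid. Let $z$ be the $(k+1)$th largest positive $\ivvi(v_i)$ on $\vals$ (or $0$ if fewer than $k+1$ exist), and set $p_i := \ivvi^{-1}(\max\{0,z\})$ under the infimum convention. Standard Myerson calculations show the ex-post optimal revenue on $\vals$ equals $\sum_{i \in W} p_i$ where $W$ is the random winner set: on a plateau of $\ivvi$ at height $z$, the threshold payment is exactly the left endpoint of that plateau, namely $p_i$.

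Next I would recognize $(p_1,\ldots,p_n)$ as a candidate in the definition of $\MK(\vals)$. Monotonicity is exactly the pointwise-ordering hypothesis: $\ivvi^{-1}(\max\{0,z\})$ is nonincreasing in $i$. The supply cap $|W| \le k$ is automatic, and individual rationality gives $v_i \ge p_i$ for every winner. The only remaining constraint is $p_i \le \v2$, and this is where the well-behaved hypothesis enters: letting $E$ be the event that every winning price is at most $\v2$, a constant fraction of $\mathrm{OPT}$'s expected revenue is collected on $E$. On $E$ the vector $(p_1,\ldots,p_n)$ is feasible for $\MK(\vals)$, so $\expect{\sum_{i \in W} p_i \given \vals, E} \le \MK(\vals)$ pointwise, since $\MK$ breaks ties optimally while the Bayesian auction breaks them uniformly. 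Taking expectations over $\vals$ and paying a constant factor for the conditioning on $E$ completes the bound.

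The main obstacle I expect is handling the ironing cleanly. One has to verify that the infimum convention for $\ivvi^{-1}$ is the right one: for any winner $i$, $\ivvi(v_i) \ge z$ combined with the monotonicity of $\ivvi$ gives $v_i \ge \inf\{w : \ivvi(w) \ge z\} = p_i$, so the winner-set characterization and individual rationality are consistent with taking $p_i$ as the per-winner payment. One also needs that the Bayesian optimal mechanism is truly tied to these $p_i$ and not to some other per-bidder reserve; in ironed settings the threshold is determined entirely by $z$ and $\ivvi$, so this is fine. Pointwise ordering is invoked in exactly one place—to certify monotonicity of $(p_1,\ldots,p_n)$—so it is the minimum structural assumption needed to make $\MK$ a meaningful prior-free benchmark in the multi-unit setting.
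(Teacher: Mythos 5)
Your reduction hinges on the identity that the optimal auction's ex-post revenue is $\sum_{i \in W} p_i$ with $p_i=\bar{\phi}_i^{-1}(\max\{0,z\})$ the left endpoint of the plateau at height~$z$. That identity fails precisely in the case the paper calls the tricky one: ties in the ironed virtual values, which occur with positive probability because ironed virtual valuation functions have plateaus. Under the random tie-breaking that defines the optimal auction, a winner~$i$ whose ironed virtual value is strictly above~$z$ does not pay the left endpoint $a_i$ of its plateau $[a_i,b_i]$ at height~$z$: if it lowered its bid into the plateau it would still win with probability $q'=(k-\ell+1)/(|S|+1)$, so Myerson's payment formula gives expected payment $q'a_i+(1-q')b_i$, strictly larger than $a_i$ in general. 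Hence $\sum_{i\in W}p_i$ only lower-bounds the optimal revenue, and proving $\sum_{i\in W}p_i\le\MK(\vals)$ on your event $E$ does not bound the optimal auction's revenue by $O(1)\cdot\MK(\vals)$. The paper's proof is organized around exactly this point: it computes the exact expected payments ($qa_i$ for the tied group~$S$, $q'a_i+(1-q')b_i$ for the group~$T$ strictly above the tie) and bounds the revenue from~$S$ and from~$T$ separately by $\MK(\vals)$, using two different monotone price vectors (pointwise ordering is invoked for both the $a_i$'s and the $b_i$'s), at the cost of a factor of~$2$; the two cannot be merged into one monotone vector because a $T$-bidder's price $q'a_i+(1-q')b_i$ may exceed the price $a_j$ of an earlier bidder in~$S$.

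A secondary gap: the full-length vector $(p_1,\ldots,p_n)$ with $p_i=\bar{\phi}_i^{-1}(\max\{0,z\})$ is generally not feasible for $\MK(\vals)$. Feasibility requires at most~$k$ bidders with $v_i>p_i$, and this counts every bidder who would accept the offer, not the auction's (at most~$k$) winners; in the tie case all of $S\cup T$ can satisfy $v_i>p_i$ while $|S|+|T|>k$, so ``the supply cap $|W|\le k$ is automatic'' does not give condition~(iii). The paper sidesteps this by projecting onto sets of at most~$k$ bidders (the top-$k$ bidders of~$S$ by $a_i$, and the set~$T$) and invoking Lemma~\ref{app:project}, which says the benchmark only decreases under projection. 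Your use of well-behavedness to enforce the cap by the second-highest valuation matches the paper's, but the plateau payment formula and the supply-feasibility of the certificate price vectors are exactly where the real work of this proof lies, and both are missing from your argument.
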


\begin{proof}
Fix an auction that is~$\beta$-competitive with~$\MK(\vals)$ on every
input.  Fix a well-behaved Bayesian multi-unit environment with
pointwise ordered valuation distributions~$\dist_1,\ldots,\dist_n$.
Let~$\A^*$ be the optimal auction for this environment.
We claim that, for every input~$\vals$ in which the revenue collected
by~$\A^*$ from the bidder with the highest valuation is at most the
second-highest valuation, the benchmark~$\MK(\vals)$ is at
least half the expected revenue of~$\A^*$ on~$\vals$.  This implies
that the expected revenue of~$\A$ is at least~$1/2\beta$ times that
of~$\A^*$ on this input.  Since the environment is well behaved, the
theorem follows.

To prove the claim,
fix an input~$\vals$, as above.  Recall that~$\A^*$, as a Bayesian
optimal auction, awards items to the (at most~$k$) bidders with the
highest positive ironed virtual valuations, breaking ties randomly.
The tricky case of the proof is when ties occur.  Assume there are~$k$
items, a group~$S$ of bidders with common ironed virtual value~$z >
0$, and a group~$T$ of $\ell \in (k-|S|,k)$ bidders with ironed
virtual value greater 
than~$\ell$ (so $|S| > k-\ell$).  We next explicitly compute the
payments collected by~$\A^*$ on this input, using the standard payment
formula for incentive-compatible mechanisms
(see~\cite{M81} or~\cite{omd}).  Let~$a_i$ and~$b_i$ denote
the left and right endpoints, respectively, of the interval of
values~$v$ that satisfy~$\ivvi(v) = z$.  Since the distributions are
pointwise ordered, the $a_i$'s and the $b_i$'s are nonincreasing
in~$i$.  Let~$q = (k-\ell)/|S|$ denote the winning probability of a
bidder in~$S$.  Define~$q' = (k-\ell+1)/(|S|+1)$ as the hypothetical
winning probability of a bidder in~$T$ if it lowered its bid to the
value~$\ivvi^{-1}(z)$.
The expected payment of a bidder~$i$ in~$S$ is~$qa_i$
(i.e., $a_i$ in the event that it wins).  The expected payment of a
bidder~$i$ in~$T$ (who wins with certainty) is~$q'a_i + (1-q')b_i$.
To complete the proof, we argue that $\MK(\vals)$ is at least the
expected revenue collected by~$\A^*$ from the bidders in~$S$, and
also at least that from the bidders in~$T$.

Projecting onto a subset of bidders only decreases the value of the
$k$-unit monotone price benchmark~$\MK(\vals)$ (see
Lemma~\ref{app:project} for the formal argument).
First, project onto
the~$k$ bidders of~$S$ with the highest~$a_i$ values.
Consider charging each such bidder the price~$a_i$.  This is a
monotone price vector.  By our assumption on the input~$\vals$, all of
these prices are at most the second-highest valuation in~$\vals$.
By the definitions, $\vali \ge a_i$ for every bidder $i \in S$ so
every offer will be accepted.  The resulting revenue is at least the
expected revenue earned by~$\A^*$ on~$\vals$,
and the value of the monotone price benchmark can only
be higher.  This shows that $\MK(\vals)$ is at least the expected
revenue collected by~$\A^*$ from bidders in~$S$.

Similarly, project onto the (at most~$k$) bidders of~$T$, and consider
charging each such bidder~$i$ the price $q'a_i + (1-q')b_i$.  Again,
this is a monotone price vector with all prices bounded above by the
second-highest valuation of~$\vals$, and every offer will be
accepted.  The value of the monotone price benchmark can only be
larger, so~$\MK(\vals)$ is also at least the expected revenue collected
by~$\A^*$ from bidders in~$T$.  The proof is complete.
\end{proof}

\subsection{Reduction from Limited to Unlimited Supply}\label{ss:reduction}

%\subsection{The \algfullK Auction}

Having justified the $k$-unit monotone price benchmark~$\MK(\vals)$,
we turn to designing auctions that approximate it well.  We show that
competing with this benchmark reduces to competing with the
benchmark~$\M$ in unlimited-supply settings.  This reduction from
limited to unlimited supply is a generalization of one for identical
bidders~\cite{G+06}.  The idea is to first identify the~$k$ ``most
valuable'' bidders, and then run an unlimited-supply auction on them.
In contrast to the identical-bidder setting in~\cite{G+06}, the most
valuable bidders with an ordering are not necessarily those with the
highest valuations.  For example, a high-valuation bidder late in the
ordering need not be valuable, because extracting high revenue from it
might necessitate excluding many moderate-valuation bidders earlier in
the ordering.

We analyze the ``black-box reduction'' shown in Figure~\ref{fig:algopt}.

\begin{figure}[h]
\hrule\medskip
\textbf{Input}: A valuation profile $\vals$ for a totally ordered
set~$N = \{1,2,\ldots,n\}$ of bidders and $k$ identical items.  A
truthful digital goods (unlimited supply) auction~$\A$ for ordered
bidders. 
\begin{enumerate}

\item Let $\prices^*$ achieve the optimum monotone price
  benchmark~$\MK(\vals)$ for~$\vals$ and $k$.  Let $S = \{i\in
  N\,:\, v_i\geq p^*_i\}$ be the set of winners
  under~$\prices^*$.

\item Run the unlimited supply auction~$\A$ on the bidders~$S$, with
  the induced bidder ordering.

\item Charge suitable prices so that truthful reporting is a dominant
  strategy for every bidder.

\end{enumerate}
\caption{\textsf{The auction \algoptfull.}\label{fig:algopt}}
\medskip\hrule
\end{figure}

\begin{theorem}\label{thm:reduction}
If~$\A$ is a truthful unlimited-supply auction with ordered bidders
that is~$\beta$-competitive with~$\M$, then the \algoptfull auction is 
a truthful limited-supply auction with ordered bidders
that is~$2\beta$-competitive with~$\MK(\vals)$.
\end{theorem}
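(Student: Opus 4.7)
The proof splits naturally into a revenue guarantee and a truthfulness argument, and I would tackle them in that order.

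For the revenue bound, my plan is to show that the sub-instance on~$S$ already has $\M$-value at least half of $\MK(\vals)$; combined with the $\beta$-competitiveness of $\A$ on~$\vals^S$, this will yield the $2\beta$ factor. Fix an optimal monotone price vector $\prices^*$ achieving $\MK(\vals)$, and let $j^*$ denote the position in $S$ holding the highest valuation. I would take $\prices^*$ restricted to~$S$, cap the single entry $p^*_{j^*}$ at $(\vals|_S)^{(2)}$, and use the result as a feasible monotone price vector for $\M(\vals|_S)$. Monotonicity is inherited from $\prices^*$ (a short case check on positions in $S$ on either side of~$j^*$ handles the cap), and the upper-bound constraint $p^*_i \le (\vals|_S)^{(2)}$ holds automatically for every $i \in S \setminus \{j^*\}$ because $p^*_i \le v_i \le (\vals|_S)^{(2)}$. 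The revenue lost by capping is at most $p^*_{j^*} \le \v2$, so
$$\M(\vals|_S) \ \ge\ \MK(\vals) - \v2.$$

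Next I would establish that $\MK(\vals) \ge 2\v2$ whenever $k \ge 2$: setting $p_i = \v2$ for every position $i$ up to and including the later of the two top-valued bidders, and $p_i = 0$ afterwards, is a feasible monotone price vector that sells to both top bidders at price $\v2$. Combining the two inequalities yields $\M(\vals|_S) \ge \MK(\vals)/2$. Since $\A$ is $\beta$-competitive with $\M$ on the input $\vals|_S$, the expected revenue of \algoptfull\ is at least $\M(\vals|_S)/\beta \ge \MK(\vals)/(2\beta)$, which is the claimed $2\beta$-competitiveness.

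For truthfulness, the reduction's allocation assigns to bidder $i$ the allocation $x^{\A}_i(\vals|_S)$ when $i \in S$ and $0$ otherwise. By Myerson's characterization for single-parameter mechanisms, dominant-strategy truthfulness reduces to showing that this allocation is nondecreasing in $v_i$ with $\vals_{-i}$ fixed; the ``suitable prices'' of Step~3 are then the standard threshold payments. In ranges of $v_i$ where $i \in S$ and the remaining members of $S$ are unchanged, monotonicity is immediate from truthfulness of $\A$. The delicate cases are the threshold at which $i$ crosses into $S$ and any transitions at which raising $v_i$ reorganizes the optimum $\prices^*$ and hence $S$.

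The main obstacle is exactly this monotonicity across structural changes in $S$: because the optimum of $\MK$ need not be unique, a single bid change can in principle reshape $\prices^*$ globally, and $\A$'s allocation to $i$ depends on all of $S \setminus \{i\}$. I would resolve this by selecting $\prices^*$ consistently: showing that when $v_i$ increases, among the optimal price vectors for the new profile there is always one under which $i$ remains a winner whenever $i$ was a winner before, and under which bidder $i$'s allocation in $\A(\vals|_S)$ only weakly increases. This relies on a structural property of $\MK$'s optimizer under a pointwise increase of a single valuation combined with the truthfulness of $\A$ on the (inherited-order) subset. Once allocation monotonicity is in place, Myerson threshold payments yield dominant-strategy truthfulness and finish the proof.
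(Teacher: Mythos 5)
Your revenue argument is essentially the paper's: project the optimal $\MK(\vals)$ prices onto~$S$, note that every $i\in S$ other than the highest-valued member satisfies $p^*_i\le v_i\le$ the second-highest valuation in~$S$, cap the one remaining price, lose at most $\val^{(2)}$, and absorb the loss via $\MK(\vals)\ge 2\val^{(2)}$. Two small remarks there. First, the certificate you propose for $\MK(\vals)\ge 2\val^{(2)}$ (price $\val^{(2)}$ up to the later of the two top bidders, price $0$ afterwards) is not feasible in general: condition (iii) counts every bidder with $v_i>p_i$, so the zero tail can create far more than $k$ strict winners; offering $\val^{(2)}$ to \emph{everybody}, as the paper does, fixes this immediately. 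Second, you silently use that \algopt's revenue is at least $\A$'s revenue on~$S$; this is true because Step~3 only raises payments (taking the max with the entry threshold), but it deserves a sentence, as in the paper's statement~(ii).

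The genuine gap is in the truthfulness half. You correctly isolate the delicate point --- as $v_i$ varies, the optimizer $\prices^*$, and hence $S\setminus\{i\}$, can change, while $\A$'s treatment of~$i$ depends on $S\setminus\{i\}$ --- but you then only \emph{announce} the resolution: that one can always select an optimal price vector for the perturbed profile under which $i$ remains a winner and under which $\A$'s allocation to~$i$ weakly increases. That is exactly the statement requiring proof, and your sketch supplies no argument for it; moreover, in the strong form you state it, it is doubtful for an arbitrary truthful~$\A$, whose allocation to~$i$ is determined by a bid-independent offer that is a function of the entire set $S\setminus\{i\}$ and can move in either direction when that set is reorganized. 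The paper's proof takes a lighter route: it argues that membership in~$S$ is a monotone allocation rule in $v_i$, that its composition with the truthful auction~$\A$ is again monotone, and then charges each winner the larger of its threshold bid for entering~$S$ and its payment under~$\A$; no attempt is made to control how the optimizer itself reshuffles. So your truthfulness section is a plan rather than a proof, and the structural lemma it hinges on is both unproven and stronger than what the paper's argument actually uses.
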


\begin{proof}
The \algoptfull auction is clearly feasible, in that there are always
at most~$k$ winners.  The first step can be implemented efficiently
using dynamic programming, so if~$\A$ runs in
polynomial time, then so does the \algoptfull auction.
To see that it is truthful, first note that the
allocation rule in the first step is monotone --- if bidder~$i$ belongs
to the computed set~$S$ in the profile~$\vals$, then it also belongs
to~$S$ for every bigger valuation~$\vali$ (holding other bidders'
valuations fixed).  The composition of this rule with the truthful
auction in the second step is also monotone (i.e., bidding higher only
increases winning probability), so there are unique payments that
render the auction truthful (see e.g.~\cite{omd}).  These payments are
easy to describe.  Monotonicity of the first step implies that each
bidder~$i$ faces a threshold bid $t_i(\vals_{-i})$ that is necessary
and sufficient to be included in the computed set~$S$.  The payment of
a winner~$i$ in the final step of the \algoptfull auction is simply
$t_i(\vals_{-i})$ or the payment computed by~$\A$, whichever is larger.

We prove the performance guarantee by arguing the following two
statements: (i) the unlimited supply benchmark~$\M$ applied to~$S$ is
at least half of the limited-supply benchmark~$\MK(\vals)$ applied to the
original bidder set; and (ii) the revenue of \algoptfull on the
original bidder set is at least that of the unlimited-supply
auction~$\A$ with the bidders~$S$.  The second statement follows
immediately from the facts that the winners of \algoptfull are the same
as those of~$\A$, and that the winners' payments are only higher.  For
statement~(i), consider prices~$\prices^*$ that
determine the benchmark~$\MK(\vals)$.  The projection~$\prices^*_S$
of this price 
vector onto the set~$S$ of bidders has revenue exactly~$\MK(\vals)$.
If $\prices^*_S$ is feasible, then it certifies that the
benchmark~$\M$ is at least~$\MK(\vals)$.  The only issue is
if the second-highest bidder is excluded from~$S$, in which
case $\prices^*_S$ might use a price larger than the second-highest
valuation in~$S$ (which is not permitted by the benchmark~$\M$).
But such a price can only extract revenue from the bidder with the
highest valuation, and every price of~$\prices^*$ is at most the
second-highest valuation~$\val^{(2)}$ of the original bidders.  Thus,
we can restore feasibility to~$\prices^*_S$ by lowering at most one
price to the second-highest valuation of~$S$, and we
lose revenue at most~$\val^{(2)}$.  Since~$\MK(\vals) \ge
2\val^{(2)}$ --- consider the price vector that offers~$\val^{(2)}$ to
everybody --- we retain at least half the revenue of~$\prices^*_S$.
Statement~(i) and the theorem follow.
\end{proof}

Of course, we can use the \algfull auction from
Section~\ref{sec:prior-free-auction} in Theorem~\ref{thm:reduction} to
obtain a truthful limited-supply auction that is~$O(1)$-competitive
with the benchmark~$\MK(\vals)$.
Theorem~\ref{thm:irregular-prior-free} implies that the resulting
auction also enjoys a strong simultaneous approximation guarantee in
Bayesian environments.

\bibliographystyle{plain}
\bibliography{m2}

\appendix

\section{Missing Proofs}

\begin{lemma}\label{app:project}
For every valuation profile~$\vals$, $k \ge 2$, and subset~$S$ of the
bidders with induced profile $\vals^S$, $\MK(\vals) \ge \M(k,\vals^S)$.
\end{lemma}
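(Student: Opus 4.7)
My plan is to prove the lemma by induction on $|N \setminus S|$; the base case $S = N$ is immediate since $\vals^S = \vals$. For the inductive step, it suffices to establish the one-step inequality $\MK(\vals^{S \cup \{i^*\}}) \ge \MK(\vals^S)$ for any $i^* \in N \setminus S$; combining with the inductive hypothesis applied to $S' = S \cup \{i^*\}$ then yields $\MK(\vals) \ge \MK(\vals^{S'}) \ge \MK(\vals^S)$.

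For the one-step inequality I would take an optimal monotone price vector $\prices^S$ for $\MK(\vals^S)$ and let $W^S = \{j_1 < \cdots < j_\ell\}$ be its set of (strict plus chosen-tie) winners, $\ell \le k$, with corresponding prices $b_1 \ge \cdots \ge b_\ell$. By routine swap arguments I would first normalize so that: (a) $j_\ell$ is the rightmost bidder in $S$ with $v \ge b_\ell$; (b) every non-winner $i \in S$ satisfies $v_i$ at most the upper price of its local strip between consecutive $W^S$-members; and (c) non-winner prices are pushed up to the upper ends of their monotone intervals, so in particular $p^S_i = b_\ell$ for every non-winner $i \in S$ to the right of $j_\ell$. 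I would then extend $\prices^S$ to a monotone $\prices^{S'}$ on $S'$ by inserting a single new price $p^{S'}_{i^*}$, chosen as large as monotonicity permits: the price of $i^*$'s predecessor in $S$, or $(\vals^{S'})^{(2)}$ if $i^*$ precedes all of $S$.

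If $v_{i^*} \le p^{S'}_{i^*}$, then $i^*$ is not a new strict winner, $\prices^{S'}$ is feasible with unchanged revenue, and the inequality follows. Otherwise $i^*$ is a forced new strict winner; if $\ell < k$ we simply absorb it for strictly greater revenue, while if $\ell = k$ we swap $j_\ell$ out in favor of $i^*$ by raising $p^{S'}_{j_\ell}$ to $b_{\ell-1}$. Using the normalizations I would verify case by case -- predecessor in $W^S$ gives $p^{S'}_{i^*} = b_{r'} \ge b_\ell$, predecessor in an earlier strip gives $p^{S'}_{i^*} = b_q$ for some $q < \ell$, suffix predecessor gives $p^{S'}_{i^*} = b_\ell$ by (c), and the prefix subcase where $v_{i^*}$ becomes the new maximum gives $p^{S'}_{i^*} = (\vals^S)^{(1)} \ge b_\ell$ while the price upper bound only relaxes -- that the net revenue change $p^{S'}_{i^*} - b_\ell$ is always nonnegative. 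The main obstacle is the degenerate subcase where all $b_r$ coincide at a common value $b$, so raising $p^{S'}_{j_\ell}$ to $b_{\ell-1} = b$ fails to actually drop $j_\ell$; I would resolve this by observing that the extension then already fits exactly $k$ bidders at the common price level after a simple tie/strict reclassification, preserving the revenue $kb = \MK(\vals^S)$. Once this case is handled, the induction closes.
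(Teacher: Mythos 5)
Your overall route is the same as the paper's: induct on adding one bidder at a time, extend the optimal monotone vector for $\vals^S$ by giving the new bidder its predecessor's price (or the cap), and, when this creates a $(k+1)$-st winner, repair feasibility by raising prices so that exactly one incumbent winner is shed. The gap is in your repair branch. First, when $b_{\ell-1} > b_\ell$, raising $p_{j_\ell}$ to $b_{\ell-1}$ evicts $j_\ell$ only if $v_{j_\ell} < b_{\ell-1}$; you never argue this. It is in fact a consequence of the optimality of $\prices^S$ together with your normalization (c) (otherwise raising $j_\ell$'s price to $b_{\ell-1}$ would already improve the revenue on $\vals^S$), but as written the eviction is just assumed. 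Second, and more seriously, the failure case is mis-identified: the raise is vacuous whenever the bottom price tier contains more than one winner, i.e.\ whenever $b_{\ell-1} = b_\ell$, not only when \emph{all} the $b_r$ coincide. Concretely, take $\vals^S = (100,\,10,\,11)$ with $k=3$: the optimal vector is $(11,10,10)$, with winners $1$ (strict at $11$), $2$ (tie at $10$), $3$ (strict at $10$), so $b_1 = 11 > b_2 = b_3 = 10$ and $j_\ell$ is strict. Raising $p_{j_\ell}$ to $b_{\ell-1}=10$ sheds nobody, and your ``all prices equal $b$, revenue $kb$'' resolution does not apply since the prices are not all equal.

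The lemma still holds in such configurations, but via a mechanism your sketch does not supply: whenever the bottom tier has at least two winners, optimality (again via a raise-the-tier argument, using that at most one bidder can exceed the second-highest valuation, so the cap cannot excuse $k\ge 2$ strict winners at the cap) forces at least one bottom-tier winner to be a tie at price $b_\ell$, and the newly inserted strict winner simply bumps that tie out of service, which the benchmark's tie-breaking permits; feasibility holds because the incumbent strict winners then number at most $k-1$. Note that even your all-equal degenerate case silently relies on this same unproven fact (if all $k+1$ candidates were strict at the common price, no ``reclassification'' restores feasibility at revenue $kb$). So the correct case split is ``singleton bottom tier'' (your swap, with the eviction justified by optimality) versus ``bottom tier of size at least two'' (drop a tie), rather than ``not all $b_r$ equal'' versus ``all equal.'' For comparison, the paper's sketch avoids this bookkeeping entirely: it raises the prices of the bidders after the new bidder step by step (growing a constant-price suffix) until some previously winning bidder rejects, and uses optimality of $\prices^*$ only to guarantee that this eventually happens.
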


\begin{proof}
(Sketch.) Fix an input~$\vals$, with monotone prices~$\prices^*$
determining~$\MK(\vals)$.  
By induction, we only need to show that adding a single new bidder~$i$ can
only increase the value of the benchmark.  
Start by offering~$i$ the same price~$q$ as its predecessor in the
ordering (or the second-highest valuation, if there is no predecessor).
If~$i$ rejects (i.e., $\vali < q$), this extended price vector is
feasible and we are done (the optimal feasible price vector is only
better).  If~$i$ accepts (i.e., $\vali \ge q$) then
the price vector is infeasible (with $k+1$ winners) and we argue as
follows.  Go through the bidders after~$i$ one by one, increasing the
offer price to~$q$.  This preserves monotonicity.  If a previously
winning bidder ever
rejects this higher offer price, we are done (feasibility is restored
and the overall revenue is higher).  If not, there is now a ``suffix''
of bidders with the common offer price~$q$.  (This case only occurs
if~$i$ is after all of the winners in~$\prices^*$.)  We now increase
their common offer price until it equals that of the previous bidder,
thereby increasing the number of bidders in the suffix.  Eventually a
bidder that was winning under $\prices^*$ will reject the new offer
price (otherwise it would contradict the optimality of~$\prices^*$),
leaving us with a feasible monotone price vector with revenue at least
that of the original one.
\end{proof}

\end{document}